\documentclass[bj,noinfoline,a4paper]{imsart}

\usepackage{amssymb,amsmath,amsthm,mathtools}
\usepackage[numbers]{natbib}
\usepackage[mathcal]{euscript}
\setattribute{journal}{name}{}

\usepackage{bbold}
\def\Isymb{{\mathbb 1}}

\makeatletter

\newcommand{\I}[1][]{\Isymb
        \@ifempty{#1}{\isparamm\subscr\subnozjel}{\subnozjel{#1}}} 
\def\subscr#1{_{(#1)}}
\def\subnozjel#1{_{#1}}

\@ifundefined{texexp}{\let\texexp\exp\def\exp{\texexp\isparam\cbrstar}}\relax

\newcommand{\cond}[1][]{\,#1|\,}
 
\def\cbrstar{\cbr*}

\def\isparam#1{\isparamm#1\relax}

\def\isparamm{\@ifnextchar{\bgroup}}

\newcounter{enumn}\newcounter{enumr}

\newenvironment{rlist}[1][]{%
  \begin{list}{%
      \hbox to 0pt{%
        \hss(\theenumr)%
      }\unskip\ignorespaces
    }{%
      \usecounter{enumr}%
    }%
    \def\theenumr{\roman{enumr}#1}%
  }{%
  \end{list}%
}

\newenvironment{nlist}[1][]{\begin{list}{\hbox to 0pt{\hss(\theenumn)}\unskip\ignorespaces}{\usecounter{enumn}}\def\theenumn{\arabic{enumn}#1}}{\end{list}}

\DeclarePairedDelimiter\abs{\lvert}{\rvert}
\DeclarePairedDelimiter\cbr{\{}{\}}
\DeclarePairedDelimiter\norm{\|}{\|}
\DeclarePairedDelimiter\br{\lbrack}{\rbrack}
\DeclarePairedDelimiter\zjel{(}{)}
\DeclarePairedDelimiter\q{\langle}{\rangle}

\def\PEzjel{\zjel*}
\def\PEbr#1{\br*}

\def\PEfont{\mathbf}
\newcommand{\PE}[1][]{\PEfont{\Pe}_{#1}%
  \@ifnextchar^{\xPE}{\isparam\PEzjel}}
\def\xPE^#1{^{#1}\isparam\PEzjel}
\newcommand{\E}{\def\Pe{E}\PE}
\renewcommand{\P}{\def\Pe{P}\PE}
\newcommand{\PQ}{\def\Pe{Q}\PE}

\def\F{{\mathcal F}}
\def\G{{\mathcal G}}

\let\event=\cbr

\def\set@internal#1#2#3{#1{#2\,:\,#3}}
\def\set{\@ifstar{\set@internal{\cbr*}}{\set@internal\cbr}}

\def\interior{\operatorname{int}}

\def\eps{\varepsilon}

\def\real{{\mathbb R}}

\def\F{{\mathcal F}}
\def\cO{{\mathcal{O}}}

\def\@bs{\relax}
\def\stripbs#1#2\relax#3{%
  \def\next{#1}%
  \ifx\@bs\next\relax\expandafter\@firstoftwo\else\expandafter\@secondoftwo\fi
  {\def#3{#2}}{\def#3{#1#2}}%
}

\def\setbs#1#2\relax{\def\@bs{#1}}
\expandafter\stripbs\string\ \relax\@bs
\expandafter\setbs\@bs\relax\relax


\def\defname#1#2#3#4{%
  \expandafter\stripbs\string#1\relax\nn
  \edef\nn{\csname#3\nn#4\endcsname}%
  \expandafter\def\nn#2\relax%
}

\def\defbx#1{%
  \defname{#1}{{\bar{#1}}}{b}{}%
}

\def\deftx#1{%
  \defname{#1}{{\tilde{#1}}}{t}{}%
}

\newcommand\defxn[2]{%
  \expandafter\newcommand\csname#1n\endcsname[1][n]{#2^{(##1)}}}

\def\ito/{It\^o}

\def\supp{\operatorname{supp}}
\def\conv{\operatorname{conv}}

\defxn SS
\defxn WW
\defxn g\gamma
\defxn s\sigma

\deftx X \deftx S
\deftx R
\defbx\eps\defbx L

\makeatother

\usepackage[colorlinks,citecolor=blue]{hyperref}

\def\d{\mathrm{d}}

\newtheorem{theorem}{Theorem}
\newtheorem{proposition}[theorem]{Proposition}
\newtheorem{lemma}[theorem]{Lemma}
\newtheorem{corollary}[theorem]{Corollary}
\newtheorem*{theorem*}{Theorem}

\theoremstyle{definition}
\newtheorem{df}{Definition}

\makeatletter
\def\barxiv#1{{\url@fmt{ArXiv e-print:~}%
    {\upshape\ttfamily}{#1}{\arxiv@base#1}}}
\def\bdoi#1{{\url@fmt{doi:~}%
    {\upshape\ttfamily}{#1}{\doi@base#1}}}
\makeatother

\begin{document}

\begin{frontmatter}
  \title{Diversity and no arbitrage}
  \runtitle{Diversity and no arbitrage}
  \begin{aug}
    \author{\fnms{Attila}
      \snm{Herczegh}\thanksref{t1,m1}\ead[label=e1]{prince@cs.elte.hu}},
    \author{\fnms{Vilmos}
      \snm{Prokaj}\corref{}\ead[label=e2]{prokaj@cs.elte.hu}\thanksref{t1,m1}}
    \and
    \author{\fnms{Mikl\'os}
      \snm{R\'asonyi}\thanksref{t2,m2}\ead[label=e3]{rasonyi@renyi.hu}}
    \thankstext{t1}{The European Union and the European Social
      Fund have provided financial support to the project under the
      grant agreement no. T\'AMOP 4.2.1./B-09/1/KMR-2010-0003.}
    \thankstext{t2}{On leave from R\'enyi Institute, Budapest.}
    \affiliation{E\"otv\"os Lor\'and University\thanksmark{m1} and 
      MTA Alfr\'ed R\'enyi Institute of Mathematics and University of
      Edinburgh\thanksmark{m2}} 
    \address{\thanksmark{m1}E\"otv\"os Lor\'and University,
      Department of Probability and Statistics\\
      P\'azm\'any P\'eter s\'et\'any 1/C, 
      Budapest, Hungary\\
      \printead{e1}\\
      \noindent\phantom{E-mail: }\printead*{e2}}
    \address{\thanksmark{m2}MTA Alfr\'ed R\'enyi Institute of
      Mathematics, Budapest and University of
      Edinburgh, 
      School of Mathematics\\
      \printead{e3}}
    \runauthor{A. Herczegh, V. Prokaj and M. R\'asonyi}
  \end{aug}
  \begin{abstract}
    A stock market is called diverse if no stock can dominate the market in
    terms of relative capitalization. On one hand, this natural property leads
    to arbitrage in diffusion models under mild assumptions. On the other hand,
    it is also easy to construct diffusion models which are both diverse and
    free  of arbitrage. Can one tell whether an observed diverse market admits
    arbitrage? 
    
    In the present paper we argue that this may well be impossible by proving
    that the known examples of diverse markets in the literature (which do admit
    arbitrage) can be approximated uniformly (on the logarithmic scale) by
    models which are both diverse and arbitrage-free. 
  \end{abstract}
  \begin{keyword}[class=AMS]
    \kwd[Primary ]{91B25}
    \kwd[; secondary ]{60H05}
  \end{keyword}

  \begin{keyword}
    \kwd{diverse market}
    \kwd{consistent price system}
    \kwd{conditonal full support}
    \kwd{multidimensional continuous semimartingale}
  \end{keyword}
\end{frontmatter}

\section{Introduction}

Stochastic portfolio theory is a relatively new branch of mathematical finance.
It was introduced and studied by \citet{MR1684221,MR1861997}, and then further
developed by \citet{MR2210925}. It provides a framework for analysing portfolio
performance under an angle which is different from the usual one. 

One of the most important notions here is \emph{diversity} of a market. In
short, diversity means that no single stock is ever allowed to dominate the
market. Diversity was proposed based on empirical grounds and is conform with
intuition. 

Absence of arbitrage (riskless profit) is the cornerstone of modern
mathematical finance.  At the technical level, there are various formulations
of arbitrage but basic economic considerations forbid that such opportunities
persist in a liquid market.  

If the log-prices follow an It\^o process with uniformly non-degenerate
voltility matrix, diversity of a market implies the existence of arbitrage
opportunities relative to the market portfolio (see section 7 of
\citet{Karatzas200989}), and thus the non-existence of equivalent martingale
measure also follows (Proposition 6.2 of \citet{Karatzas200989}). 

This situation may seem dramatic at first sight: the common sense notion of
diversity contradicting the most fundamental principle of asset pricing. There
must be a way out: indeed, relaxing the hypothesis of uniformly nondegenerate 
volatility one may easily construct models where both diversity and absence of
arbitrage hold true. However, a much harder question immediately arises: can
we tell whether the price processes seen in today's market (which clearly
satisfy the diversity assumption) are arbitrage-free or not?  

In this paper we derive the somewhat unsettling conclusion that
possibly there is no way to answer this question based on statistical
analysis. Our conclusions parallel those of \cite{GR}.

We are looking at the diverse market models of \citet{MR2210925} and
\citet{Osterrieder2006}. We prove that under an arbitrarily small model
misspecification diversity is retained but relative arbitrage is not. More
precisely, we show that to these diverse market models (admitting relative
arbitrage) there are models arbitrarily close on the logarithmic scale that  
no longer admit arbitrage (though they are still diverse).

The rest of the paper is organized as follows. Section 2 introduces the notion
of diversity presenting examples with and without relative arbitrage
opportunities. Section 3 contains the result on consistent price system that
is needed in subsection 2.3. Section 4 provides a new result on conditional
full support in higher dimensions, an extension of the work of
\citet{MR2731340}. 

\section{Diversity}

Let $T>0$ be a fixed time horizon. We consider a filtered probability space
$(\Omega,(\F_t)_{t\in[0,T]},\P),$ where the filtration is assumed to satisfy
the usual conditions with $\F_0$ being trivial and all events belong to
$\F_T$. 
 
\citet{MR2210925}, see also \citet{Karatzas200989}, call a market diverse
``if no single stock is ever allowed to dominate the entire market in terms of
relative capitalization''. 
To give a formulation of this requirement let the positive processes $S_i$,
$i=1,\dots,n$, denote the capitalization of the $i^{th}$ company.  The market
weights of the companies are defined by 
\begin{displaymath}
  \mu_i(t)=\frac{S_i(t)}{\sum_{j=1}^n S_j(t)}
\end{displaymath}
and we let $\mu_{(1)}(t)=\max_{j}\mu_j(t)$ the largest market
weight. 

A market is called \textit{diverse} on the time-horizon
$[0,T]$ if there exists $\delta\in(0,1)$ such that
\begin{equation*}
  \mu_{(1)}(t)<1-\delta,\quad\text{almost surely for all $t\in[0,T]$}. 
\end{equation*}
Similarly a market is called \textit{weakly diverse} on the time-horizon
$[0,T]$ if for some $\delta\in(0,1)$  
\begin{equation*}
  \frac1T\int_0^T \mu_{(1)}(t)\d t <1-\delta,\quad\text{almost surely}. 
\end{equation*}

A portfolio process $\pi'(t)=(\pi_1(t),\dots,\pi_n(t))$ describes the proportion
of wealth invested in the stocks. It is required that $\pi$ is progressively
measurable and $\pi_i(t)\geq 0$ for $t\in[0,T]$, $i=1,\dots,n$ and
$\sum_i\pi_i(t)=1$ for all $t\in[0,T]$. An example of a portfolio process is
the \textit{market portfolio} defined by the market weights $\mu$.

\citet{Karatzas200989} consider markets where the evolution of the
prices are \ito/ processes, written on the logarithmic scale as
\begin{equation}
  \label{eq:dlogS}
  \d\log S_i(t)=\gamma_i(t)\d t+\sum_{\nu=1}^d \sigma_{i\nu}(t) \d W_{\nu}(t), \quad
  i=1,\dots,n,
\end{equation}
where $W$ is a $d$--dimensional  Brownian motion in the filtration $\F$ and
the coefficients $\gamma,\sigma$ are progressively measurable and satisfy
the integrability condition
$\int_0^T\abs{\gamma(t)}+\norm{\sigma(t)}^2dt<\infty$.  

The value $V^{z,\pi}$ of a portfolio $\pi$ with initial value $z$ is given by
\begin{equation*}
  \frac{\d V^{z,\pi}(t)}{V^{z,\pi}(t)}=
  \sum_{i=1}^{n} \pi_i(t)\frac{\d S_i(t)}{S_i(t)},\quad V^{z,\pi}(0)=z.
\end{equation*}

  Given two portfolios $\pi$ and $\rho$, we say that 
  $\pi$ represents an \textit{arbitrage} opportunity relative to $\rho$ over the
  time-horizon $[0,T]$ if we have $V^{\pi}(0)=V^{\rho}(0)>0$ and
  $$
  \P{V^{\pi}(T)\geq V^{\rho}(T)}=1 \quad\text{and}\quad
  \P{V^{\pi}(T)>V^{\rho}(T)}>0.
  $$

It is an interesting property of  diverse market models that 
there exists arbitrage relative to the market portfolio $\mu$ provided that
there exist $\eps,M>0$ finite constants such that
\begin{equation}
  \label{eq:non-deg}
  \eps \abs{\xi}^2\leq \abs{\sigma'(t)\xi}^2\leq M\abs{\xi}^2,\quad
  \text{a.s. for all $t\in[0,T]$ and $\xi\in\real^n$}.
\end{equation}
Roughly speaking, 
$a(t)=\sigma(t)\sigma'(t)$ is bounded
and non-degenerate uniformly in $(t,\omega)\in[0,T]\times\Omega$.

For the proof of this claim, we refer the reader to
\cite{MR2210925,Karatzas200989}. Note that the existence of relative arbitrage
oppotunity excludes the possibility of the existence of an equivalent
martingale measure, although equivalent local martingale measure may exist.

\subsection{Examples of diverse market}
\label{sec:ex}

We recall in this subsection two examples of diverse markets. The first one is
due to \citet[Theorem 6.1]{MR2210925}. In this type of example the drift is
positive for all but the largest company. The drift of the largest price has a log-pole-type
singularity that prevents its market weights to reach $1-\delta$. 

In the simplest such example, that \citet{MR2210925} present, 
the evolution of the price is written as in \eqref{eq:dlogS}.  The
volatility matrix $\sigma$ satisfies \eqref{eq:non-deg}.
The crucial assumption is about the drift $\gamma$. They fix a vector
$g=(g_1,\dots,g_n)$ of positive numbers. Then $\gamma$ is expressed as
\begin{equation}
  \label{eq:gammai}
  \gamma_i(t)=\I{\mu(t)\notin \cO_i}g_i+
  \I{\mu(t)\in \cO_i}\frac{M}{\delta(\log(\mu_{(1)}(t))-\log(1-\delta))}, 
\end{equation}
where $\cO_i=\set{x\in\real^n}{\max_{j< i}x_j<x_i,\,\max_{j>i}x_j\leq
  x_i}$. Then $\gamma_i$ is $g_i$ except when the $i$-th company has the largest
market weight. In the latter case $\gamma_i$ is negative and decreases  to
$-\infty$ as $\mu_{(1)}(t)$ approchaes $1-\delta$. This negative drift is
strong enough to make the market diverse, that is, it keeps the process
$S(t)$ in the open set
\begin{equation}
  \label{eq:cO def}
  \cO=\cO(\delta)=\set*{x\in(0,\infty)^n}{\max_j \frac{x_j}{x_1+\cdots+x_n}<1-\delta}.
\end{equation}

\citet{Osterrieder2006} concerns arbitrage possibilities 
of diverse markets, in the usual sense, i.e. in the sense of
\citet{MR2200584}.
They give a general construction of diverse markets by
conditioning the price process to stay in $\cO$ for the entire time-horizon
$[0,T]$. They use a condition called \textbf{ND}, staying for
non-degeneracy, that ensures that arbitrage possibilities exist in the diverse
market constructed.  

To be more precise and concrete,
one can start with a pre-model under some probability
$\P[0]$. We may assume, as \citet{Osterrieder2006} do, that  
under  the probability
$\P[0]$ the price processes are  
positive continuous local martingales, that is, 
\begin{equation*}
\frac{\d S_i(t)}{S_i(t)}=\d M_i(t),\quad 1\leq i\leq n,\quad t\geq0.  
\end{equation*}
where $M$ is a continuous local martingale under $\P[0]$. 

Then $\P$ is obtained by conditioning
\begin{equation*}
  \P{A}=\P[0]{A\cond\forall t\in[0,T],\,S(t)\in\cO},\quad\text{for $A\in \F_T$}.
\end{equation*}

We apply this construction with
a special form of $M$, namely with
\begin{displaymath}
  \d M_i(t)= \sum_{\nu=1}^d \sigma_{i\nu}(t) \d W_{\nu}(t),
\end{displaymath}
where the volatility matrix $\sigma$ satisfies \eqref{eq:non-deg} under
$\P[0]$. This condition implies that \textbf{ND} holds on sufficiently small
time-horizons so arbitrage in the usual sense exists under $\P$. Also, there
is arbitrage relative to the market portfolio as \eqref{eq:non-deg} holds
under $\P$.

\subsection{Diverse market models without relative arbitrage}

The market with two assets 
\begin{displaymath}
S_1(t)=\exp{W_1(t)}\quad\text{and}\quad
S_2(t)=\exp{W_1(t)+\arctan({W}_2(t))}  
\end{displaymath}
 (driven by the $2$-dimensional
Brownian motion $W$) is clearly diverse and admits an equivalent martingale
measure at the same time. Note, however, that the volatility of
$(\log S_1,\log S_2)'$
is \emph{not} uniformly non-degenerate, which was an important hypothesis for
showing the existence of relative arbitrage.

Our goal now is to show that in many cases, especially in the examples
recalled in the previous subsection, even though diverse markets
present relative arbitrage opportunities, small model misspecifications or
proportional transaction costs lead to diverse models that no longer admit
arbitrage. 

To state the main theorem of the paper we need the following variant of the
notion of conditional full support.
\begin{df}\label{df:cfs}
  Let $\cO\subset\real^n$ be open set and
  $(S(t))_{t\in[0,T]}$ be a continuous adapted 
  process taking values in $\cO$. We say that $S$ has 
  conditional full support in $\cO$ 
  if for all
  $t\in[0,T]$ and open set $G\subset C([0,T],\cO)$ 
  \begin{equation}
    \label{def:CFS}
    \P{S\in G\cond \F_t}>0,\quad
    \text{a.s. on the event $S|_{[0,t]}\in\set*{g|_{[0,t]}}{g\in G}$}.
  \end{equation}
\end{df}
We will also say that $S$ has full support in $\cO$, or simply full support
when $\cO=\real^n$,  if \eqref{def:CFS} holds
for $t=0$ and for all open subset of $C([0,T],\cO)$. 

Recall  also, the notion of consistent price system. 
\begin{df}
  Let $\eps>0.$ An $\eps$-consistent price system to $S$ is a pair $(\tS,\PQ)$,
  where $\PQ$ is a probability measure equivalent to $\P$ and  $\tS$ is a
  $\PQ$-martingale in the filtration $\F$,  
  such that
  \begin{equation*}
    \frac{1}{1+\eps}\leq \frac{\tS_i(t)}{S_i (t)}\leq 1+\eps ,
    \quad\text{almost surely for all $t\in[0,T]$ and $i=1,\dots,n$}.
  \end{equation*}
\end{df}

Note, that $\tS$ is a martingale under $\PQ$, hence we may assume that it is
c\`adl\`ag, but it is not required in the definition that $\tS$ is continuous.

\begin{theorem}\label{thm:CPS}
  Let $\cO\subset(0,\infty)^n$ be the open set defined by \eqref{eq:cO def} and 
  assume that the price process takes values and has  conditional full support
  in $\cO$. 

  Then for any $\eps>0$ there is an $\eps$-consistent price system  $(\tS,\PQ)$
  such that $\tS$ takes values in $\cO$.
\end{theorem}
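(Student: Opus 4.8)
The plan is to build the consistent price system by first passing to an equivalent measure $\PQ$ under which the conditionally-full-support process $S$ becomes close, in the multiplicative sense, to a genuine martingale that stays in $\cO$. The natural tool is a Brownian-motion-type representation: I would like to argue that conditional full support in $\cO$ allows one to find, for each $\eps>0$, a $\PQ$-martingale $\tS$ with $\tS_i(t)/S_i(t)\in[1/(1+\eps),1+\eps]$. A clean route is via the shadow-price / approximation technique used in the transaction-cost literature: conditional full support of $S$ is exactly the hypothesis under which $S$ admits $\eps$-consistent price systems (this is the Guasoni–Rásonyi–Schachermayer circle of ideas, of which \citet{MR2731340} is the relevant higher-dimensional input). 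So the first step is to invoke (or reprove) that conditional full support of $S$ yields \emph{some} $\eps$-consistent price system $(\tS,\PQ)$, without yet worrying about where $\tS$ lives.

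The second, and genuinely new, step is to upgrade this to a price system that \emph{takes values in $\cO$}. Here the key geometric observation is that $\cO$, as defined in \eqref{eq:cO def}, is an open cone (scaling-invariant): $x\in\cO\iff \lambda x\in\cO$ for all $\lambda>0$, and moreover $\cO$ is convex intersected with the positive orthant, or at least star-shaped in a way that is stable under small multiplicative perturbations. Concretely, if $S(t)\in\cO$ and $\tS_i(t)/S_i(t)\in[1/(1+\eps),1+\eps]$ for all $i$, then $\tS(t)$ lies within a bounded multiplicative distortion of a point of $\cO$; one checks that for $\eps$ small enough (depending only on $\delta$ and $n$) such a distortion cannot push a point out of $\cO(\delta')$ for a slightly smaller $\delta'$. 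Thus, starting from conditional full support in $\cO(\delta)$ and applying the construction at level $\eps$, the resulting $\tS$ automatically stays in $\cO(\delta')\subset\cO$; relabelling $\delta'$ back to $\delta$ (or noting the statement only requires values in the given $\cO$) finishes this point. Alternatively, one projects $\tS$ radially onto $\cO$ and controls the error, but the cone structure makes the direct estimate cleaner.

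The third step is to verify the martingale property is preserved by whatever modification is made in step two. If step two only used the \emph{a priori} containment (i.e. $\tS$ already lands in $\cO$ for free once $\eps$ is small), there is nothing to do. If instead a genuine projection or stopping was used, I would need to check that $\tS$ remains a $\PQ$-(local) martingale and, since it is bounded on $[0,T]$ by the consistency estimate together with boundedness of $S$ — or at least uniformly integrable after a suitable localization — that it is a true martingale, so that $\PQ$ may be taken unchanged. Finally, assemble: the pair $(\tS,\PQ)$ from step one, shown in step two to be $\cO$-valued for small $\eps$, and confirmed in step three to still be an $\eps$-consistent price system, is the desired object; for general $\eps$ one simply applies the small-$\eps$ case, since an $\eps_0$-consistent system is a fortiori $\eps$-consistent for $\eps\geq\eps_0$.

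The main obstacle I anticipate is step two: making precise the claim that conditional full support in the specific set $\cO$ (rather than in $(0,\infty)^n$ or $\real^n$) is strong enough, after the abstract $\eps$-CPS construction, to keep $\tS$ inside $\cO$. The abstract construction of an $\eps$-consistent price system does not in general respect a prescribed state-space constraint; one needs the hypothesis that $S$ has full support \emph{in $\cO$} (so that the approximating/randomizing perturbations can be chosen to respect $\cO$ as well) together with the cone geometry of $\cO$. Reconciling the "for all $\eps$" in the conclusion with the "$\eps$ small enough" needed for the geometric containment is the delicate bookkeeping, but it is resolved by the monotonicity of the consistency condition in $\eps$ noted above.
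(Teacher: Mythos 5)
The proposal does not work, and the failure is precisely at the step you flag as "genuinely new." Your geometric claim in step two is false: the set
\begin{displaymath}
  \cO=\set{x\in(0,\infty)^n}{\max_j \tfrac{x_j}{x_1+\cdots+x_n}<1-\delta}
\end{displaymath}
is indeed a cone under uniform scaling $x\mapsto \lambda x$, but it is \emph{not} stable under independent coordinate-wise multiplicative perturbations of bounded size near its boundary. If $S(t)\in\cO$ with $\mu_{(1)}(t)$ arbitrarily close to $1-\delta$ (which the conditional full support in $\cO$ does not rule out), and one multiplies the leading coordinate by $1+\eps$ while dividing the others by $1+\eps$, the resulting weight becomes
\begin{displaymath}
  \frac{(1+\eps)^2\,\mu_{(1)}(t)}{(1+\eps)^2\,\mu_{(1)}(t)+1-\mu_{(1)}(t)},
\end{displaymath}
which strictly increases in $\eps$ and exceeds $1-\delta$ for every $\eps>0$ once $\mu_{(1)}(t)$ is close enough to $1-\delta$. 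So for no fixed $\eps>0$, however small, does the generic $\eps$-consistent price system produced by an off-the-shelf GRS/Pakkanen argument stay inside $\cO$; also note that with $\delta'<\delta$ one has $\cO(\delta)\subset\cO(\delta')$, not the reverse, so "relabelling $\delta'$ back to $\delta$" does not help either. The third step (monotonicity in $\eps$) is correct but irrelevant once step two fails.

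The paper's route is structurally different and is designed exactly to circumvent this. Instead of first building an $\eps$-CPS and then trying to push it into $\cO$, Theorem~\ref{thm:cps} proves a strengthened CPS result where the tolerance is a path-dependent positive process $(\eps_t)$ subject to a Lipschitz-type condition \eqref{eq:eps-LS}; one then chooses $\eps_t$ both to encode the multiplicative bound $\eps_t=\eta/(1+\eta)\min_i S_i(t)$ \emph{and} to be dominated by the distance of $S_t$ to $\cO^c$ as in \eqref{eq:eps_min_d(S,O^c)}. Since the construction guarantees $\abs*{\tS_t-S_t}\leq\eps_t$, the containment $\tS_t\in\cO$ is automatic, with no geometric stability of $\cO$ under multiplicative perturbations required. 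To make this work one cannot quote the abstract CPS existence result and post-process; the random-walk-with-retirement construction itself has to be carried out with a state-dependent step size that shrinks near $\partial\cO$, which is the actual content of Lemma~\ref{l2} and its proof. That is the missing idea in your proposal.
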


The proof of Theorem \ref{thm:CPS} is given in Section \ref{sec:CPS}. 
In the rest of this subsection we show that the examples recalled in
subsection \ref{sec:ex} have conditional full support in $\cO$. Then Theorem
\ref{thm:CPS} 
applies and we can conclude that for any $\eps>0$ there is a
price process $\tS$, uniformly $\eps$-close to $S$ on the logarithmic scale,
such that  no arbitrage (absolute or relative to the market
portfolio) possibilities exist for the price $\tS$.

To check the condition of Theorem \ref{thm:CPS} we apply the next
Theorem whose proof is given in section \ref{sec:CFS}.
To compare it with existing results we mention
that it seems to be new in the sense, that we do not assume that our process
solves a stochastic differential equation as it is done in \citet{MR2190038}
and it is not only for one dimensional processes as it is in
\citet{MR2731340}.

\begin{theorem}\label{thm:CFS}
  Let $X$ be a $n$-dimensional \ito/ process on $[0,T]$, such that 
  \begin{displaymath}
    \d X_i(t)=\mu_i(t)\d t+\sum_{\nu=1}^n\sigma_{i\nu}(t) \d W_\nu(t)
  \end{displaymath}
  Assume that $\abs{\mu}$ is bounded and $\sigma$ satisfies \eqref{eq:non-deg}.
  Then   $X $ has conditional full support.  
\end{theorem}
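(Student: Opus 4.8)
The plan is to establish conditional full support for the It\^o process $X$ by combining a Girsanov-type change of measure with a support theorem for Brownian motion.

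\medskip

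\noindent\textbf{Outline of the approach.} First I would reduce the conditional statement at a general time $t\in[0,T]$ to an unconditional support statement. By the Markov-type structure of It\^o processes and the usual conditioning arguments (regular conditional probabilities, the fact that the filtration satisfies the usual conditions and $\F_0$ is trivial), it suffices to show that for every starting point and every nonempty open tube $G\subset C([0,T],\real^n)$ around a path emanating from that point, the process enters $G$ with positive probability. Concretely, fix a continuous path $g$ with $g(0)=X(0)$ and $\eta>0$; I want $\P(\sup_{s\le T}\abs{X(s)-g(s)}<\eta)>0$, and the conditional version on $\{X|_{[0,t]}\in\{h|_{[0,t]}:h\in G\}\}$ follows by applying the same estimate to the shifted process $X(t+\cdot)$ started from $X(t)$, using that the coefficients are progressively measurable and the bounds on $\mu$ and $\sigma$ are uniform.

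\medskip

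\noindent\textbf{Key steps.} (i) Since $\abs{\mu}$ is bounded, introduce an equivalent measure $\PQ$ via Girsanov's theorem removing the drift: under $\PQ$, $X$ satisfies $\d X(t)=\sigma(t)\,\d\widetilde W(t)$ for a $\PQ$-Brownian motion $\widetilde W$; the density is a genuine martingale because $\mu$ is bounded (Novikov). Conditional full support is invariant under equivalent changes of measure, so it is enough to prove it for the driftless process. (ii) For the driftless process, write $X(s)-X(0)=\int_0^s\sigma(u)\,\d\widetilde W(u)$. The nondegeneracy \eqref{eq:non-deg} means $\sigma(u)$ is invertible with $\norm{\sigma(u)^{-1}}$ and $\norm{\sigma(u)}$ bounded uniformly. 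The idea is then to run a further Girsanov transformation, or directly a comparison, so that the law of $X$ is absolutely continuous with respect to (a time-changed) Brownian motion, for which the support theorem gives that every open tube has positive probability. (iii) Alternatively, and perhaps more cleanly, I would use the representation: for any target path $g$, choose the ``control'' $\widetilde W$ to track $\int_0^\cdot \sigma(u)^{-1}\d(g(u)-X(0))$, which is well-defined and has bounded quadratic variation by the two-sided bound on $\sigma$; a Cameron--Martin / Girsanov argument then shows the event that $X$ stays within $\eta$ of $g$ has positive probability, uniformly in the starting point. Finally, patch this together with step (i) to conclude.

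\medskip

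\noindent\textbf{Main obstacle.} The delicate point is that $\sigma$ is only progressively measurable, not a function of the current state, so one cannot invoke the classical Stroock--Varadhan support theorem for SDEs directly. The core technical work is to handle this path-dependence: one must show that the ``steering'' argument can be carried out with the process's own driving noise, keeping all estimates uniform in $\omega$ using only the bounds on $\abs{\mu}$ and the two-sided bound \eqref{eq:non-deg} on $\sigma$. I expect this to require a careful stopping-time / localization argument controlling the stochastic integral $\int_0^s\sigma(u)\,\d\widetilde W(u)$ on small time increments, combined with the uniform ellipticity to ensure the noise is ``rich enough'' in every direction at every instant. Once that uniformity is in place, iterating over a partition of $[0,T]$ and multiplying the resulting positive conditional probabilities gives the claim, and the reduction in step (i) lets us ignore the drift throughout.
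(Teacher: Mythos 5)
You correctly identify the same two reductions the paper uses: drift removal via Girsanov (valid since $|\mu|$ is bounded) and passing from the conditional to the unconditional statement via regular conditional laws. You also correctly flag the central obstacle: $\sigma$ is only progressively measurable, so neither the Stroock--Varadhan support theorem nor a classical Cameron--Martin steering argument applies directly.

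However, you never supply the step that actually resolves that obstacle. After both reductions one is left with a $d$-dimensional continuous martingale $X$, $X_0=0$, with $cI_d \le \mathrm{d}\langle X\rangle_t/\mathrm{d}t \le CI_d$, and one must show $\mathbf{P}\bigl(\sup_{t\le T}|X_t|<\eps\bigr)>0$. None of your three suggested mechanisms produces this estimate: absolute continuity with respect to a time-changed Brownian motion fails because the time change is random and the matrix $\sigma_t$ need not be a scalar multiple of a fixed orthogonal frame; the steering Girsanov only reduces tubes around general paths to balls around constant paths, it does not prove the ball estimate itself; and iterating conditional probabilities over a partition is circular, since lower-bounding each conditional factor is precisely the statement at issue. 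The paper's missing ingredient is a comparison with a squared Bessel process: set $R_t=|X_t|^2$, note that $N_t=\int_0^t X_s^{\mathsf T}\,\mathrm{d}X_s/|X_s|$ is a one-dimensional martingale with $c\le\mathrm{d}\langle N\rangle_t/\mathrm{d}t\le C$, time-change $N$ into a Brownian motion $\beta$, and observe that $\tilde R_t=R_{\eta(t)}$ solves $\mathrm{d}\tilde R_t=2\sqrt{\tilde R_t}\,\mathrm{d}\beta_t+b(t)\,\mathrm{d}t$ with $0\le b\le Cd/c$. A one-dimensional SDE comparison then dominates $\tilde R$ pathwise by a squared Bessel process of dimension $\delta\ge Cd/c$, which stays below $\eps^2$ on any bounded interval with positive probability, and the time change compresses $[0,T]$ by at most a factor $C$. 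This single comparison is the technical core that your proposal leaves open.
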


Consider first the example of diverse market due  to \citet{MR2210925}, see
also the review paper \cite{Karatzas200989}, recalled in subsection
\ref{sec:ex}. So fix a $\delta\in(0,1)$ such that $\cO(\delta)$ is not empty,
and take the coefficients $\gamma$ 
defined in \eqref{eq:gammai}, with $M$ taken from \eqref{eq:non-deg}.

Then we take the open sets $\cO_k=\cO(\delta+1/k)$ 
for $k\geq 1$ and 
note that for any open set $G\subset C([0,T],\cO)$ we have
\begin{equation*}
  G=\bigcup_{k=1}^{\infty} G_k,
\end{equation*}
where $G_k=G\cap C([0,T],\cO_k)$.
Hence it is enough to show that for $t\in[0,T]$ 
\begin{equation}
  \label{eq:S in G_k}
  \P{S\in G\cond\F_t}>0,\quad\text{on $S|_{[0,t]}\in
      \set*{g|_{[0,t]}}{g\in G_k}$}. 
\end{equation}
Let $\tau_k=\inf\set{t\in[0,T]}{S(t)\notin\cO_k}$. Then $\tau_k=\infty$
exactly when $S\in C([0,T],\cO_k)$ while $\tau_k=0$ when $S(0)\notin \cO_k$. 

For $k=1,2,\dots$  
we define the process $\Sn[k]$ with  the equation
\begin{align*}
  \d \Sn[k](t)&= \gn[k](t)\d t + \sigma(t)\d W,\quad \Sn[k](0)=S(0),\\
\intertext{where}
\gn[k](t)&=\gamma(t\wedge \tau_k)\I{S(0)\in\cO_k}.
\end{align*}
Note that $\Sn[k]$ satisfies the conditions of Theorem \ref{thm:CFS}, hence
$\Sn[k]$ has conditional full support (in $\real^n$), and $S=\Sn[k]$ on the
event $\tau_k=\infty$.  

The conditional full support property of $\Sn[k]$ gives, for the open set
$G_k=G\cap C([0,T],\cO_k)$, that for $t\in[0,T]$
\begin{equation*}
  \P{\Sn[k]\in G_k\cond \F_t}>0,\text{a.s. on $\Sn[k]|_{[0,t]}\in\set*{g|_{[0,t]}}{g\in
      G_k}$}.  
\end{equation*}
To obtain \eqref{eq:S in G_k} one has to add only that $\event{\Sn[k]\in
  G_k}=\event{S\in G_k}\subset \event{S\in G}$. This proves that Theorem
\ref{thm:CPS} applies to the diverse market constructed by \citet{MR2210925}.

Next we turn to diverse market model attributed to
\citeauthor{Osterrieder2006} in subsection \ref{sec:ex}.
By Theorem \ref{thm:CFS} the process $S$ has conditional full support under
$\P[0]$.  For an open set $G\subset C([0,T],\cO)$ we have by Bayes
formula
\begin{displaymath}
  \P{S\in G\cond \F_t}=
  \frac{\E[0]{\I{S\in G}\frac{\d\P}{\d\P[0]}\cond \F_t}}
  {\E[0]{\frac{\d\P}{\d\P[0]}\cond \F_t}}=
  \frac{\P[0]{S\in G\cond \F_t}}
  {\P[0]{S\in C([0,T],\cO)\cond \F_t}}.
\end{displaymath}
Since $S$ has conditional full support under $\P[0]$ both the numerator and
the denominator are positive on the event $S|_{[0,t]}\in\set{g|_{[0,t]}}{g\in
  G}$. Then
\begin{displaymath}
  \P{S\in G\cond \F_t}>0,\quad\text{on $S|_{[0,t]}\in\set*{g|_{[0,t]}}{g\in
  G}$}.
\end{displaymath}
So $S$ has conditional full support in $\cO$ under
the measure $\P$ and Theorem \ref{thm:CPS} applies to this type of examples as
well.

\section{Consistent Price System and 
Conditional Full support} 
\label{sec:CPS}
  
The aim of this section is to prove Theorem \ref{thm:CPS}. It will follow from
the following reinforcement of a result due to \citet{MR2398764}.

\begin{theorem}\label{thm:cps}
  Let $\cO\subset\real^n$ be an open set and  $(S(t))_{t\in[0,T]}$ be an
  $\cO$--valued, continuous adapted process having conditional full
  support  in $\cO$.

  Besides, let $(\eps_t)_{t\in[0,T]}$ be a continuous positive process, that
  satisfies 
  \begin{equation}\label{eq:eps-LS}
    \abs{\eps_t-\eps_s}\leq L_s 
    \sup_{s\leq u\leq t}\abs*{S(u)-S(s)},\quad\text{for all $0\leq s\leq
      t\leq T$} 
  \end{equation}
  with some progressively measurable  finite valued $(L_s)_{s\in[0,T]}$.

  Then  $S$ admits an  $\eps$-consistent price system in the sense that, 
  there is an equivalent probability $\PQ$ on $\F_T$, a  process
  $(\tS(t))_{t\in[0,T]}$ taking values in $\cO$, such that $\tS$ is a   $\PQ$
  martingale, bounded  in $L^2(\PQ)$ and finally  $\abs{S(t)-\tS(t)}\leq
  \eps_t$ almost surely for all $t\in[0,T]$. 
\end{theorem}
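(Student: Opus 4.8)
The plan is to build $\tS$ by a successive-refinement construction on a dyadic sequence of partitions, combined with the conditional full support hypothesis, following the strategy of \citet{MR2398764} but carrying along the variable tolerance $\eps_t$ and the constraint that $\tS$ stay in $\cO$. First I would reduce to the case where $S$ is bounded and $S(0)$ is deterministic: this can be arranged by stopping and by conditioning on $\F_0$ (which is trivial here anyway), and by the a.s.\ continuity of $t\mapsto S(t)$ on the compact $[0,T]$. On the bounded region, \eqref{eq:eps-LS} shows that $\eps_t$ is bounded below by a positive constant on each piece where $L$ is bounded, so after a further localization we may assume $\eps_t\ge c>0$ uniformly.

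The heart of the argument is an iterative "shadowing" scheme. At stage $m$ I would have a partition $0=t_0^m<\dots<t_{N_m}^m=T$, an equivalent measure $\PQ_m$, and a $\PQ_m$-martingale $\tS^m$ that is piecewise "frozen" between grid points in an appropriate sense, stays in $\cO$, and satisfies $|\tS^m(t)-S(t)|\le\eps_t$. To pass from stage $m$ to stage $m+1$ I refine each subinterval $[t_j^m,t_{j+1}^m]$: using the conditional full support of $S$ in $\cO$ (Definition \ref{df:cfs}), the conditional law of the $S$-increment over that subinterval, given $\F_{t_j^m}$, charges every relative neighbourhood of every admissible continuation; this lets me perturb the conditional law by an equivalent change of measure so that the increment of $\tS^{m+1}$ over the subinterval has conditional mean zero (the martingale property), while keeping $\tS^{m+1}$ within $\cO$ and within the $\eps$-tube. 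The continuity estimate \eqref{eq:eps-LS} is exactly what guarantees that the tube does not pinch shut faster than $S$ itself moves, so the perturbations one is allowed to make at scale $\sup_{s\le u\le t}|S(u)-S(s)|$ are always large enough to install a genuine martingale increment; this compatibility between the width of the $\eps$-tube and the oscillation of $S$ is the key structural point that makes a variable $\eps_t$ work at all.

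I would then control the Radon--Nikodym densities: the product $\frac{\d\PQ_{m+1}}{\d\PQ_m}$ of the successive one-step density corrections must be shown to converge in $L^1(\P)$, which is where one needs the perturbations to shrink geometrically — arranging $N_m\to\infty$ with mesh going to $0$ fast enough that the cumulative relative-entropy (or $L^2$) cost is summable. Simultaneously, $\tS^m$ converges uniformly on $[0,T]$, a.s.\ and in $L^2$, because the increments are uniformly small on fine partitions; the limit $\tS$ is then a $\PQ$-martingale (closedness of the martingale property under $L^1$-convergence of both the processes and the densities), bounded in $L^2(\PQ)$ by construction, $\cO$-valued because $\cO$ is open and the approximants stay in a fixed compact subset of $\cO$ (after the initial localization), and within the $\eps$-tube because each $\tS^m$ is and the tube is closed. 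Finally I would undo the localizations: patch the stopped pieces back together by the usual pasting of equivalent measures and martingales, noting that the $L^2$-bound and the tube constraint are preserved under this pasting.

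The main obstacle I anticipate is the simultaneous maintenance of the three constraints at each refinement step — the exact martingale (zero conditional mean) property, the containment $\tS^{m+1}\in\cO$, and the tube condition $|\tS^{m+1}-S|\le\eps$ — while still keeping the density corrections small enough to be summable. Conditional full support gives room to perturb, and \eqref{eq:eps-LS} gives the right amount of room, but quantifying this so that one genuinely gets an equivalent (not merely absolutely continuous) limiting measure with an $L^2$-bounded martingale is the delicate part; I expect this to require a careful choice of the perturbation supported on a shrinking neighbourhood and a telescoping estimate on $\sum_m \| \frac{\d\PQ_{m+1}}{\d\PQ_m}-1\|$.
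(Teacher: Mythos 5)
Your high-level strategy — use conditional full support to make an equivalent measure change that installs martingale increments on a grid, while keeping the approximant inside the $\eps$-tube and inside $\cO$ — is indeed the right idea and is the spirit of both \cite{MR2398764} and the paper's proof. But the use of a \emph{deterministic} dyadic partition $0=t_0^m<\dots<t_{N_m}^m=T$ is a genuine gap. You want $\tS^m$ to be a step process that is frozen between grid points and satisfies $|\tS^m(t)-S(t)|\leq\eps_t$ almost surely; with deterministic mesh, no matter how fine, the oscillation of $S$ over a subinterval exceeds $2\eps_t$ with positive probability, so a frozen value cannot stay in the tube almost surely. Localizing to $\eps_t\geq c>0$ does not help, because the oscillation over a fixed interval has no a.s.\ bound. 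The paper resolves exactly this by making the grid \emph{random}: it uses the exit-time stopping times $\tau_{n+1}=T\wedge\inf\{t>\tau_n : |S_t-S_{\tau_n}|>\eps_t/2\}$, together with the reduction to a decreasing $\eps$, which forces the tube constraint by construction and removes the need for any refinement hierarchy.

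A second, related point: once the grid is random you no longer need a sequence of measures $\PQ_m$ with a limiting argument. The paper builds a single discrete-time skeleton $(X_n)$ on the stopping-time grid (a ``random walk with retirement''), verifies the two hypotheses of the Guasoni--R\'asonyi--Schachermayer construction — namely that $0\in\interior\conv\supp Q_{\Delta_{n+1}|\F_{\tau_n}}$ on $\{\tau_{n+1}<T\}$ and that $\P{\tau_{n+1}=T\mid\F_{\tau_n}}>0$, both derived from conditional full support together with \eqref{eq:eps-LS} and the reduction $\eps_t<\mathrm{dist}(S_t,\cO^c)$ — and obtains one equivalent $\PQ$ with $X=\lim_n X_n\in L^2(\PQ)$ directly. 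Then $\tS_t=\E[\PQ]{X\mid\F_t}$ and the tube bound follows from Jensen applied conditionally, because $X_\nu$ with $\nu=\inf\{n:\tau_n\geq t\}$ is $\eps_t$-close to $S_t$. Your iterated-measure approach would additionally require proving that the infinite product of densities converges to something strictly positive a.s.\ (so that the limit is equivalent, not merely absolutely continuous) and that the martingale property survives the joint limit in processes and measures — delicate points the paper's single-measure, stopping-time construction avoids entirely.
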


The main theorem of \cite{MR2398764}, covers the case when $\cO=(0,\infty)^n$
and $-\eta S_i(t)/(1+\eta)\leq \tS_i(t)-S_i(t)\leq \eta S_i(t)$ for
$i=1,\dots,n$, with some positive
constant $\eta>0$. That is, we get their result by the choice 
\begin{equation}
  \label{eq:eps_t}
  \eps_t= \frac{\eta}{1+\eta} \min_{i}{S_i(t)}  
\end{equation}
and \eqref{eq:eps-LS} holds with
$L_s=\eta$. So Theorem  \ref{thm:cps} contains the result of \citet{MR2398764}
as a special case. 
We also have to mention the recent paper of \citet{Maris2012}. They prove a
similar statement with $\eps_t=\eps$ constant.

Our Theorem \ref{thm:CPS} also follows easily from Theorem
\ref{thm:cps}; the  choice of $\eps_t$ given in \eqref{eq:eps_t} yields an
$\eta$-consistent price system evolving in $\cO$.

\begin{proof}[Proof of Theorem \ref{thm:cps}]To keep the process $\tS$ inside
  $\cO$ we decrease $\eps_t$, if neccesary, such that 
  \begin{equation}
    \label{eq:eps_min_d(S,O^c)}
    0<\eps_t<\inf\set*{\abs*{S_t-x}}{x\notin\cO},\quad
    \text{holds for all $t\in[0,T]$}.
  \end{equation} 
  Indeed, taking $\beps_t=\eps_t\wedge
  \frac12\inf\set{\abs*{S_t-x}}{x\notin\cO}$ the process $\beps$ is positive
  and fulfills \eqref{eq:eps-LS} with $\bL=L\vee (1/2)$. So in what follows we
  assume that \eqref{eq:eps_min_d(S,O^c)} holds also.

  The proof is based on two steps. First, similarly to the proof in
  \cite{MR2398764}, a 
  random walk with retirement is constructed. The properties of this random
  walk are collected in the next Lemma.
  \begin{lemma}\label{l2}
    Under the assumption of Theorem  \ref{thm:cps} there is
    a sequence of stopping times $(\tau_k)_{k\geq 1}$, a sequence
    of random variables $(X_k)_{k\geq0}$ and an equivalent probability $\PQ$
    such that 
    \begin{rlist}
    \item\label{l2:it1} $\tau_0=0$, $(\tau_k)$ is increasing and
      $\cup_k\event{\tau_k=T}$ has full probability,
      \smallskip
    \item\label{l2:it2} $(X_k)_{k\geq0}$ is a $\PQ$ martingale in the discrete
      time filtration $(\G_k=\F_{\tau_k})_{k\geq0}$, bounded in $L^2(\PQ)$, 
      \smallskip
    \item\label{l2:it3} if $\tau_k\leq t\leq \tau_{k+1}$ then
      $\abs{S_t-X_{k+1}}\leq\eps_t$. 
    \end{rlist}
  \end{lemma}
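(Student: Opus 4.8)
The plan is to construct the stopping times $(\tau_n)_{n\geq0}$, the random variables $(X_n)_{n\geq0}$ and the density of $\PQ$ with respect to $\P$ simultaneously, by induction on $n$, following the ``random walk with retirement'' scheme of \citet{MR2398764} and adapting it both to the state-dependent width $\eps_t$ and to the constraint \eqref{eq:eps_min_d(S,O^c)}. Throughout I would keep $X_n=S_{\tau_n}$, so that $X_n$ is automatically $\G_n$-measurable and sits at the centre of the ball $B(S_{\tau_n},\eps_{\tau_n})$; the martingale increments are produced by letting $S$ run until it leaves a small ball around $S_{\tau_n}$ and reweighting the exit location so that its conditional barycenter returns to $S_{\tau_n}$.

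Start with $\tau_0=0$, $X_0=S_0$ and density $\equiv 1$. At the inductive step, on the retirement event $\event{\tau_n=T}$ set $\tau_{n+1}=T$, $X_{n+1}=X_n$ and $Z_{n+1}=1$, so that retirement is absorbing. On $\event{\tau_n<T}$ choose a $\G_n$-measurable radius $r_n$ with $0<r_n\leq 2^{-n}\bigl(1\wedge\eps_{\tau_n}/(2+L_{\tau_n})\bigr)$, let $\tau_{n+1}$ be the first time after $\tau_n$ at which $\abs*{S_t-S_{\tau_n}}=r_n$, capped at $T$ (a stopping time, with $\tau_{n+1}>\tau_n$ by continuity of $S$), put $X_{n+1}=S_{\tau_{n+1}}$ if $\tau_{n+1}<T$, and $X_{n+1}=X_n$ if $\tau_{n+1}=T$ (retirement at this step). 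By \eqref{eq:eps-LS}, while $S$ stays within $r_n$ of $S_{\tau_n}$ one has $\eps_t\geq\eps_{\tau_n}-L_{\tau_n}r_n\geq 2r_n$, so for every $t\in[\tau_n,\tau_{n+1}]$ one gets $\abs*{S_t-X_{n+1}}\leq 2r_n\leq\eps_t$, which is property (iii); the reduction \eqref{eq:eps_min_d(S,O^c)} moreover keeps $X_{n+1}\in\cO$.

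The core of the argument is the choice of the reweighting $Z_{n+1}$ on $\event{\tau_n<T}$. Conditionally on $\F_{\tau_n}$ the shifted path $u\mapsto S_{\tau_n+u}-S_{\tau_n}$ has conditional full support (this is Definition \ref{df:cfs} after a time shift); in particular, conditionally on $\G_n$ and on the non-retirement event $\event{\tau_{n+1}<T}$, the exit point $S_{\tau_{n+1}}$ has full support on the sphere $\partial B(S_{\tau_n},r_n)$, so that the centre $S_{\tau_n}$ is an interior point of the convex hull of that support. I would therefore take a $\G_{n+1}$-measurable $Z_{n+1}\geq0$ with $\E{Z_{n+1}|\G_n}=1$ that is equal to a $\G_n$-measurable constant $c_n\geq1$ on $\event{\tau_{n+1}=T}$ --- with $c_n$ chosen so that $\PQ{\tau_{n+1}=T|\G_n}=\tfrac12$, which is possible because $\P{\tau_{n+1}=T|\G_n}>0$ by conditional full support --- and that, on $\event{\tau_{n+1}<T}$, reweights the conditional law of $S_{\tau_{n+1}}$ so as to enforce $\E[\PQ]{X_{n+1}|\G_n}=X_n=S_{\tau_n}$; this last is a finite-dimensional moment-matching problem, solvable precisely because $S_{\tau_n}$ lies in the interior just identified (pick small caps around $n+1$ points of the sphere whose convex hull contains $S_{\tau_n}$ in its interior, and solve a linear system for the weights). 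The density of $\PQ$ on $\F_{\tau_{n+1}}$ is then $Z_1\cdots Z_{n+1}$.

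It remains to verify (i) and (ii). For (ii): $X$ is a $\PQ$-martingale by construction, and since $\abs*{X_{n+1}-X_n}\leq 2r_n\leq 2^{1-n}$ the martingale increments have square-summable $L^2(\PQ)$-norms, so $X$ is bounded in $L^2(\PQ)$ by orthogonality of the increments. For (i): the uniform conditional retirement probability $\tfrac12$ forces $N^*:=\min\set{n}{\tau_n=T}<\infty$ $\PQ$-almost surely; since retirement is absorbing with $Z_k\equiv 1$ afterwards, $\prod_kZ_k$ is eventually constant along every path, and one checks that it is uniformly integrable under $\P$ with strictly positive limit, whence $\PQ\sim\P$ on $\F_T=\bigvee_n\G_n$ and $\cup_n\event{\tau_n=T}$ has full $\P$-probability as well. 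The step I expect to be the real obstacle is precisely this construction of the $Z_{n+1}$: turning conditional full support --- a purely qualitative ``every open set is charged'' statement --- into the quantitative, $\G_n$-measurable data that must simultaneously (a) place the conditional barycenter of the geometrically constrained exit position exactly at $X_n$, (b) give retirement enough $\PQ$-mass at each step that the number of steps is almost surely finite, and (c) keep $\prod_kZ_k$ uniformly integrable under $\P$ with a strictly positive limit. Reconciling (b) with (c) --- the constant $c_n$ on the rare retirement event may be large, yet the infinite product of densities must still converge to an equivalent one --- is the delicate point, and is exactly where the argument of \citet{MR2398764} must be reworked rather than merely quoted.
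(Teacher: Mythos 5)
Your construction differs from the paper's at a critical point, and this difference introduces a circularity that you flag as "the delicate point" but do not resolve.

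The paper takes $\tau_{n+1}=T\wedge\inf\set{t>\tau_n}{\abs{S_t-S_{\tau_n}}>\eps_t/2}$, so on $\event{\tau_{n+1}<T}$ the increment satisfies $\abs{S_{\tau_{n+1}}-S_{\tau_n}}\geq\eps_{\tau_{n+1}}/2\geq\eps_T/2$ (after the reduction to nonincreasing $\eps$). Thus the increments are bounded below by the a.s.\ positive random constant $\eps_T/2$, and property (\ref{l2:it1}) follows \emph{directly under $\P$} from continuity of the sample paths of $S$: if $\tau_n(\omega)<T$ for all $n$, the path would have jumps of size $\geq\eps_T(\omega)/2$ in every neighbourhood of $\sup_n\tau_n(\omega)$, a contradiction. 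Only after (\ref{l2:it1}) is secured under $\P$ does one build $\PQ$ by the \citet{MR2398764} reweighting, with the $2^{-n}$ decay enforced not in the exit radius but in the conditional second moment $\E{Z_n\abs{\Delta_n}^2\mid\F_{\tau_{n-1}}}\leq2^{-n}$ via the mass $Z_n$ places on retirement.

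You instead impose $r_n\leq 2^{-n}\bigl(1\wedge\eps_{\tau_n}/(2+L_{\tau_n})\bigr)$, i.e.\ you force the exit radii to shrink geometrically. This does make the martingale increments small, so your $L^2(\PQ)$ argument in (\ref{l2:it2}) is fine \emph{once $\PQ$ exists}, but it destroys the pathwise argument for (\ref{l2:it1}): with $r_n\to0$ nothing prevents $\tau_n$ from converging under $\P$ to a time strictly less than $T$ on a set of positive $\P$-measure, and then the product $\prod_k Z_k$ has infinitely many nontrivial factors on that set. You attempt to recover (\ref{l2:it1}) by arranging $\PQ\event{\tau_{n+1}=T\mid\G_n}=\tfrac12$ and then arguing under $\PQ$, but to pass from "$\cup_n\event{\tau_n=T}$ has full $\PQ$-probability" to "full $\P$-probability" you first need $\PQ\sim\P$, and your own verification that $\prod_k Z_k$ is $\P$-integrable with a strictly positive limit relies on the product being eventually constant $\P$-a.s., which is exactly (\ref{l2:it1}). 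This is a genuine circularity, not a technical delicacy that a more careful execution of the same plan would remove: the fix is to abandon the geometric shrinkage of $r_n$, keep the exit increments bounded below (as the paper does), establish (\ref{l2:it1}) purely pathwise under $\P$, and reintroduce the $2^{-n}$ decay only through the weight $Z_n$ places on the retirement event when bounding $\E{Z_n\abs{\Delta_n}^2\mid\F_{\tau_{n-1}}}$. Your treatment of (\ref{l2:it3}) and the barycentre/full-support-on-the-sphere argument for $0\in\interior\conv\supp Q_{\Delta_{n+1}\mid\F_{\tau_n}}$ are otherwise in line with the paper's ideas.
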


  The second step of the argument is to take $\tS_t=\E[\PQ]{X\cond \F_t}$, where
  $X=\lim_{k\to\infty} X_k$. 
  Then $\tS$ is 
  a martingale under $\PQ$ bounded in $L^2(\PQ)$ since the variable $X$
  is in $L^2(\PQ)$. 

  It remains to show that $\abs{\tS_t-S_t}\leq \eps_t$ for $t\in[0,T]$. By
  \eqref{eq:eps_min_d(S,O^c)} this ensures also that $\tS_t\in\cO$.
  By   the right continuity of $\tS$ and $S$ it is enough to deal with $t>0$.
  For $t\in(0,T]$ introduce 
  the  random index
  $\nu=\nu(t)=\inf\set{k}{\tau_k\geq t}$. Note that $\nu$ is almost surely
  finite   by (\ref{l2:it1}) of Lemma \ref{l2}. Clearly $\nu$ is 
  a $\G$ stopping time, and $\tau_{\nu}$ is a stopping time 
  in the filtration $\F$. For the stopped $\sigma$-fields we have
  $\G_\nu=\F_{\tau_\nu}$. 

  Then  $\tau_{\nu-1}\leq t\leq  \tau_{\nu}$ by the definition of $\nu$. As
  $(X_k)_{k\geq0}$ is a martingale, by property (\ref{l2:it2}), in the
  filtration $\G$ we have 
  $X_{\nu}=\E[\PQ]{X\cond \G_\nu}=\E[\PQ]{X\cond \F_{\tau_{\nu}}}$.
  By property (\ref{l2:it3})  $\abs{S_t-X_{\nu}}\leq
  \eps_t$. 
  Putting all these together, we get
  \begin{displaymath}
    \abs{S_t-\tS_t}=\abs*{S_t-\E[\PQ]{X\cond \F_t}}=
    \abs*{S_t-\E[\PQ]{X_{\nu(t)}\cond \F_t}}\leq
    \E[\PQ]{\abs*{X_{\nu(t)}-S_t}\cond \F_t}\leq \eps_t.\qedhere
  \end{displaymath}
\end{proof}

We use the next corollary of the conditional full support property, which also
justifies the name. It is related to the strong conditional full support in
the terminology of \cite{MR2398764}. We give at the end of this section a
direct proof instead of referring to the indirect proof using measurable
selection of \cite{MR2398764}.
 
\begin{corollary}\label{cor:scfs}
  Assume that the continuous adapted process $S$ evolving in $\cO$ has
  conditional full support in $\cO$. Let $\tau$ be a stopping time and denote
  by $Q_{S|\F_\tau}$ the regular version of the conditional distribution of
  $S$ given $\F_\tau$. 

  Then  the support of the random measure $Q_{S|\F_\tau}$ is 
  \begin{displaymath}
    \supp Q_{S|\F_\tau}=\set*{g\in
      C([0,T],\cO)}{g|_{[0,\tau]}=S|_{[0,\tau]}},\quad\text{almost surely}.
  \end{displaymath}
\end{corollary}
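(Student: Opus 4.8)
The plan is to prove the two inclusions separately. The inclusion
$\supp Q_{S|\F_\tau}\subseteq\set{g\in C([0,T],\cO)}{g|_{[0,\tau]}=S|_{[0,\tau]}}$
is the easy direction: since $S|_{[0,\tau]}$ is $\F_\tau$-measurable, any regular conditional distribution $Q_{S|\F_\tau}(\omega,\cdot)$ is almost surely concentrated on paths $g$ with $g|_{[0,\tau(\omega)]}=S|_{[0,\tau(\omega)]}(\omega)$, and this closed set therefore contains the support. One has to be a touch careful with the random time horizon, but this is handled by a countable dense argument in the value of $\tau$ or by noting that $\{g : g|_{[0,\tau]}=S|_{[0,\tau]}\}$ is a measurably-parametrized closed set.

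The substantive direction is to show that every $g\in C([0,T],\cO)$ with $g|_{[0,\tau]}=S|_{[0,\tau]}$ lies in $\supp Q_{S|\F_\tau}$ almost surely, i.e.\ that $Q_{S|\F_\tau}$ charges every open ball around such a $g$. First I would reduce to a countable collection of open sets: fix a countable base $(G_m)_{m\geq1}$ of $C([0,T],\cO)$, and it suffices to show that for each $m$,
\begin{displaymath}
  Q_{S|\F_\tau}(G_m)>0\quad\text{a.s.\ on the event }
  \set{S|_{[0,\tau]}\in\set{g|_{[0,\tau]}}{g\in G_m}}.
\end{displaymath}
Granting this for all $m$, one intersects the corresponding full-probability events; then on a set of full probability, for every $g$ agreeing with $S$ on $[0,\tau]$ and every basic neighbourhood $G_m\ni g$, one has $g|_{[0,\tau]}=S|_{[0,\tau]}$, so $S|_{[0,\tau]}\in\{h|_{[0,\tau]}:h\in G_m\}$ and hence $Q_{S|\F_\tau}(G_m)>0$; this says exactly that $g\in\supp Q_{S|\F_\tau}$.

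So the crux is the displayed claim for a fixed open $G=G_m$ and a general stopping time $\tau$ — essentially Definition~\ref{df:cfs} with the deterministic time $t$ replaced by a stopping time. The approach I would take is a discretization of $\tau$ from above: set $\tau^{(k)}=\lceil \tau\, 2^k/T\rceil\, T 2^{-k}\wedge T$, so $\tau^{(k)}\downarrow\tau$ and $\tau^{(k)}$ takes values in the finite set $D_k=\{jT2^{-k}:0\le j\le 2^k\}$. On each atom $\{\tau^{(k)}=t\}\in\F_t$, Definition~\ref{df:cfs} (applied with the fixed time $t$ and the open set $G$) gives $\P(S\in G\mid \F_t)>0$ a.s.\ on $\{S|_{[0,t]}\in\{g|_{[0,t]}:g\in G\}\}$; patching over $t\in D_k$ yields $\P(S\in G\mid\F_{\tau^{(k)}})>0$ a.s.\ on the corresponding event for $\tau^{(k)}$. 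I would then let $k\to\infty$: by right-continuity of the filtration $\F_{\tau^{(k)}}\downarrow\F_\tau$, so $\P(S\in G\mid\F_{\tau^{(k)}})\to\P(S\in G\mid\F_\tau)=Q_{S|\F_\tau}(G)$ by the backward (reverse) martingale convergence theorem. The main obstacle is controlling the conditioning event under this limit: the event $\{S|_{[0,\tau^{(k)}]}\in\{g|_{[0,\tau^{(k)}]}:g\in G\}\}$ must be compared with $\{S|_{[0,\tau]}\in\{g|_{[0,\tau]}:g\in G\}\}$. Here I would use that $G$ is open together with uniform continuity of $S$ on $[0,T]$: if $S|_{[0,\tau]}$ extends to a path in $G$ and $\tau^{(k)}\downarrow\tau$, then for $k$ large the restriction $S|_{[0,\tau^{(k)}]}$ is close (in sup norm) to a path in $G$ and, since $G$ is open, itself extends to a path in $G$; so the conditioning event for $\tau^{(k)}$ eventually contains (up to null sets) the event for $\tau$. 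Combining, $Q_{S|\F_\tau}(G)=\lim_k\P(S\in G\mid\F_{\tau^{(k)}})>0$ a.s.\ on $\{S|_{[0,\tau]}\in\{g|_{[0,\tau]}:g\in G\}\}$, as required. The remaining routine points — measurability of the random sets involved, handling $\tau=0$ and $\tau=T$, and the $T2^{-k}$ bookkeeping — I would treat briefly.
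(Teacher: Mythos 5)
Your overall plan follows the paper's rather closely: fix an open $G$ from a countable base, discretize $\tau$ from above by finitely-valued stopping times, apply Definition~\ref{df:cfs} on the atoms of the discretized stopping time, and show (using that $G$ is open and $S$ is uniformly continuous) that the conditioning event for $\tau^{(k)}$ eventually absorbs the one for $\tau$. All of that is sound and matches the structure of the paper's proof.

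However, the final step has a genuine gap. You conclude that $Q_{S|\F_\tau}(G)=\lim_k\P(S\in G\mid\F_{\tau^{(k)}})>0$ on the event $\{S|_{[0,\tau]}\in G_\tau\}$ because each term in the sequence is eventually positive there. But a limit of strictly positive reals can perfectly well be zero, so positivity of the $\P(S\in G\mid\F_{\tau^{(k)}})$ does not transfer to the limit; the backward martingale convergence theorem gives you convergence but no lower bound. The missing ingredient is the structural fact that a non-negative (c\`adl\`ag) martingale, once it hits zero, stays at zero. Concretely, writing $M_t=\P(S\in G\mid\F_t)$, optional sampling gives $M_\tau=\E\bigl[M_{\tau^{(k)}}\mid\F_\tau\bigr]$ (since $\tau\leq\tau^{(k)}$ and $M$ is bounded), and $M_{\tau^{(k)}}\geq 0$; hence $\{M_\tau=0\}\subset\{M_{\tau^{(k)}}=0\}$ up to a null set, equivalently $\{M_{\tau^{(k)}}>0\}\subset\{M_\tau>0\}$. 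With this, no limiting argument is needed at all: your earlier steps give $\{S|_{[0,\tau]}\in G_\tau\}\subset\bigcup_k\{M_{\tau^{(k)}}>0\}\subset\{M_\tau>0\}$ up to null, which is the required inclusion. This is precisely what the paper accomplishes by observing that $(M_t-M_{t\wedge\tau})\I[A]$, with $A=\{M_\tau=0\}$, is a non-negative martingale started from zero and therefore vanishes identically. Once you replace the ``limit of positives is positive'' step with this optional-sampling argument, your proof becomes correct and is essentially the paper's.
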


\begin{proof}[Proof of Lemma \ref{l2}]
  Without loss of generality we may assume that $\eps$ is decreasing. Indeed,  
  by taking $\beps_t=\min_{s\leq t} \eps_s$, we have $0< \beps_t\leq
  \eps_t$ and  for $s\leq t$ 
  \begin{displaymath}
    \abs*{\beps_t-\beps_s}\leq \sup_{s\leq u\leq t} \abs*{\eps_u-\eps_s}\leq
    L_s \sup_{s\leq u\leq t} \abs*{S_u-S_s}
  \end{displaymath}
  So the condition \eqref{eq:eps-LS} holds for $\beps$ as well.

  So in what follows we assume that $(\eps_t)_{t\in[0,T]}$ is decreasing and
  \eqref{eq:eps_min_d(S,O^c)} holds.

  The definition of $(\tau_k,X_k)_{n\geq0}$ is then straightforward. We take 
  $\tau_0=0$ and $X_0=S_0$. If $(\tau_k,X_k)$ are already defined then we take
  \begin{align*}
    \tau_{k+1}&=T \wedge \inf\set*{t>\tau_k}{\abs*{S_t-S_{\tau_k}}>\eps_t/2},\\
    X_{k+1}&=X_k\I{\tau_{k+1}=T}+S_{\tau_{k+1}}\I{\tau_{k+1}<T}.
  \end{align*}
  Now, it is easily seen from the definition 
  that $\abs*{X_{k+1}-S_{\tau_k}}\leq \eps_{\tau_{k+1}}/2.$ 
  Indeed, there are three cases 
  \begin{nlist}
  \item $\tau_{k+1}<T$, then $X_{\tau_{k+1}}=S_{\tau_{k+1}}$ and the
    estimation follows by the choice of $\tau_{k+1}$.
  \item $\tau_k<T=\tau_{k+1}$, then $X_{k+1}=X_k=S_{\tau_k}$ and the
    estimation is obvious.
  \item $\tau_k=\tau_{k+1}=T$. Then there is $k_0<k$ such that
    $\tau_{k_0}<\tau_{k_0+1}=T$, and
    $X_{k+1}=X_k=\cdots=X_{k_0}=S_{\tau_{k_0}}$ and
    $S_{\tau_k}=S_T=S_{\tau_{k_0+1}}$. By the choice of $\tau_{k_0+1}$ we have
    that $\abs{X_{k+1}-S_{\tau_k}}=\abs{S_{\tau_{k_0+1}}-S_{\tau_{k_0}}}\leq
    \eps_T/2=\eps_{\tau_{k+1}}/2$. 
  \end{nlist}

  Then for $\tau_k\leq t\leq \tau_{k+1}$
  \begin{displaymath}
    \abs*{X_{k+1}-S_t}\leq
    \abs*{X_{k+1}-S_{\tau_k}}+\abs*{S_t-S_{\tau_k}}\leq
    \frac12\zjel{\eps_{\tau_{k+1}}+\eps_t}\leq \eps_t
  \end{displaymath}
  as $\eps$ is decreasing.
  Hence Property \eqref{l2:it3} holds.

  Property \eqref{l2:it1} follows easily from the continuity of the sample
  path of $S$ on $[0,T]$. Indeed, assume that for a given $\omega$, we have
  $\tau_k(\omega)<T$ for all $k$. Then at $\tau(\omega)=\sup_k\tau_k(\omega)$
  the sample path $S(\omega)$ could not be continuous, as
  $\abs{S_{\tau_{k+1}}(\omega)-S_{\tau_k}(\omega)}\geq \eps_T(\omega)/2$.

  To construct the probability measure $\PQ$ and prove Property \eqref{l2:it2}
  we apply the argument of  \citet{MR2398764}. 
  With the notation $\Delta_{k+1}=X_{k+1}-X_{k}$
  they showed that if 
  \begin{align}
    \label{eq:int conv supp}
    &0\in\interior\conv\supp Q_{\Delta_{k+1}|\F_{\tau_k}},\quad
    \text{almost surely on  $\tau_{k+1}<T$}\\ 
    \label{eq:tau_n=T}
    &\P{\tau_{k+1}=T\cond \F_{\tau_k}}>0. 
  \end{align}
  then there exists an equivalent probability $\PQ$ satifying the requirements
  of the statement. 

  Roughly speaking, \eqref{eq:int conv supp}  implies the
  existence of $Z_k$ such that $\E{Z_k\Delta_{k}\cond \F_{\tau_{k-1}}}=0$ and
  $\E{Z_k\cond \F_{\tau_{k-1}}}=1$. One can define $Z_k$ in such a way that it
  charges most of the mass to the events $\event{\tau_k=T}$. With this it is
  possible to achieve that 
  \begin{equation}
    \label{eq:ZDelta}
   \E{Z_k\abs{\Delta_k}^2\cond[\Big] \F_{\tau_{k-1}}}\leq 2^{-k}, 
  \end{equation}
  and that the partial products $L_k=\prod_{\ell=1}^k Z_\ell$ are 
  convergent in $L^1$.
  Then with $L=\prod Z_k$ 
  and $\d\PQ= L \d\P$, using \eqref{eq:ZDelta} one can show  that $X\in
  L^2(\PQ)$. 
  For details we refer the reader to the proof of Theorem 1.2 in \cite[pages
  508-510]{MR2398764}. 
  
  So to finish the proof we have to show \eqref{eq:int conv supp} and
  \eqref{eq:tau_n=T}. Note, that it is enough to elaborate the proof for $k=0$
  and $\F_0$ being trivial, as by conditioning on $\F_{\tau_k}$ and
  linearly relabelling the time interval $[\tau_k,T]$ into $[0,T]$ we
  can reduce the 
  general case to this special case. Indeed all our arguments are based on
  full support of the conditional law of $S$ given $\F_{\tau_k}$, the
  properties of $(S,\eps)$ given in \eqref{eq:eps-LS},
  \eqref{eq:eps_min_d(S,O^c)} and the non-increase of $\eps$. Each of these
  hold under the regular version of the conditional law of $(S,\eps)$ given
  $\F_{\tau_k}$ and  they are not sensitive to a continuous
  time-change. 
  Even though the time-change is random it depends only
  on $\tau_k$, which is measurable with respect to
  $\F_{\tau_k}$.

  For \eqref{eq:tau_n=T} it is enough to show that $\P{\tau_1=T}>0$.
  Given $L_{0},\eps_{0}$ we
  define $\eta=\eps_{0}/(3(L_{0}\vee1))$. Then
  \begin{equation}\label{eq:small change}
   \P{\forall t\in[0,T],\,\abs{S_t-S_{0}}<\eta}>0,
  \end{equation}
  by Corollary \ref{cor:scfs}. Using condition \eqref{eq:eps-LS} we have
  \begin{displaymath}
    \eps_t\geq \eps_{0}-L_{0}\eta\geq\frac23\eps_{0}\geq 2\eta>
    2\abs{S_t-S_{0}},
    \quad 
    \text{for all $t\in[0,T]$ when $\sup_{t\in[0,T]}\abs{S_t-S_{0}}<\eta$
    }.
  \end{displaymath}
  That is
  \begin{displaymath}
   \event*{\sup_{t\in[0,T]}\abs{S_t-S_{0}}<\eta}
   \subset\event*{\tau_{1}=T}.  
  \end{displaymath}
  Hence \eqref{eq:tau_n=T} follows by \eqref{eq:small change} in the special
  case $k=0$ and $\F_{\tau_0}$ being trivial, and also in the general case as we
  have already remarked in the previous paragraph.

  Next we turn to \eqref{eq:int conv supp}. For the special case $k=0$ and
  $\F_{\tau_0}$ trivial, it simplifies to (by a slight abuse of notation)  
  \begin{equation}\label{eq:intconvsupp n=0}
    0\in \interior\conv \supp (X_1-X_0). 
  \end{equation}
  So we prove \eqref{eq:intconvsupp n=0}, from this  the general case follows. 

  Let us denote by $\pi_{r}$ the
  projection onto the ball with center $0$ and radius $r$, that is
  \begin{displaymath}
    \pi_{r}(y)=
    \begin{cases}
      \frac{r}{\abs{y}}y & \abs{y}\geq r,\\
      y &\text{otherwise}.
    \end{cases}
  \end{displaymath}
  We show below that there is  
  a positive $\delta$ such that
  $\pi_\delta\zjel{\supp (X_1-X_0)}$ contains the entire sphere
  $\set{y}{\abs{y}=\delta}$. This clearly implies
  \eqref{eq:intconvsupp n=0}.

  Let   $v$ be a unit vector in $\real^n$ and $\eta>0$. Define 
  \begin{displaymath}
    G_\eta=
    \set*{g\in C([0,T],\cO)}{\abs{g(t)-(S_0+v\eps_0t/T)}<\eta,\, t\in [0,T]}.   
  \end{displaymath}
  By \eqref{eq:eps_min_d(S,O^c)} $\eps_0$ is smaller than the
  distance of $S_0$ from the complement of $\cO$, hence $G_\eta$ is a non-empty
  open subset  of $C([0,T],\cO)$. By the full support property $\P{S\in
    G_\eta}>0$ for all $\eta>0$. 

  When $S\in G_\eta$ and $\eta$ is smaller
  than $\eps_0/2$, then $S$ exits the ball with center $S_0$ and
  radius $\eps_0/2$ hence $\tau_1<T$. At $\tau_1$ we have that 
   $\abs{X_1-X_0}=\abs{S_{\tau_1}-S_0}=\eps_{\tau_1}/2$ and 
   by \eqref{eq:eps-LS}
   \begin{displaymath}
     \eps_{\tau_1}\geq
     \eps_0-L_0\sup_{0\leq u\leq\tau_1}\abs{S_u-S_0}
   \end{displaymath}
   Define $f(t)=S_0+v\eps_0 t/T$.  Since $S\in G_\eta$
   \begin{align*}
     \sup_{0\leq u\leq\tau_1}\abs{S_u-S_0}&\leq 
     \eta+\sup_{0\leq u\leq\tau_1}\abs{f(u)-S_0}\\&=
     \eta+\abs{f(\tau_1)-S_0}\leq 2\eta+\abs{S_{\tau_1}-S_0}\\ &=
     2\eta+\frac12\eps_{\tau_1},
   \end{align*}
   that is
   \begin{displaymath}
     \eps_{\tau_1}\geq\eps_0-L_0(2\eta+\frac12\eps_{\tau_1}),\quad
     \eps_{\tau_1}\geq\frac{\eps_0-L_02\eta}{1+L_0/2}
   \end{displaymath}
  and
  \begin{displaymath}
    \abs{X_1-X_0}=\frac{\eps_{\tau_1}}2\geq\frac{\eps_0-L_02\eta}{2+L_0}.
  \end{displaymath}
  Now, taking $\eta$ so small that $2L_0\eta<\eps_0/2$ and
  $\delta=\eps_0/(4+2L_0)$ we obtain that the closed set $\pi_\delta(\supp
  (X_1-X_0))$ intersects the set 
  $\set{y}{\abs{y}=\delta,\quad \abs{y-\delta v}\leq \eta}$. Since this is
  true for all $\eta$ small enough and unit vector $v$ \eqref{eq:intconvsupp
    n=0} follows and the proof is complete.
\end{proof}

\begin{proof}[Proof of Corollary \ref{cor:scfs}]For a fixed open $G\subset C([0,T],\cO)$ the process $M_t=\P{S\in
    G\cond \F_t}$ is a non-negative martingale. We may take the c\`adl\`ag version of
  this martingale. Let $A=\event{M_\tau=0}$. Then by the martingale property
  $(M_t-M_{t\wedge\tau})\I[A]=\int_0^t \I[A]\I{\tau<s}dM_s$ is a non-negative
  martingale starting from zero, hence 
  \begin{equation}
    \label{eq:MA}
    \P{\forall t\in[0,T],\,(M_t-M_{t\wedge\tau})\I[A]=0}=1.
  \end{equation}

  We use the notation $G_t=\set{g|_{[0,t]}}{g\in G}$ and note that as $S$ has
  conditional full support in the sense of Definition \ref{df:cfs} we have
  that $M_t>0$ on the event $S|_{[0,t]}\in G_t$.

  Next we approximate $\tau$ by stopping times
  $\tau_k=2^{-k}\zjel{\br{2^k\tau}+1}$ and note that by \eqref{eq:MA}
  \begin{displaymath}
    A\cap\event{\tau_k=\ell 2^{-k}}\subset \event{M_{\ell2^{-k}}=0},
  \end{displaymath}
  holds up to a  null set.
  Thus
  \begin{displaymath}
    A^c\cap\event{\tau_k=\ell2^{-k}}\supset 
    \event{M_{\ell2^{-k}}>0}\cap\event{\tau_k=\ell2^{-k}}\supset 
    \event{S|_{[0,\ell2^{-k}]}\in G_{\ell2^{-k}}}\cap\event{\tau_k=\ell2^{-k}}
  \end{displaymath}
  Taking union we obtain
  \begin{displaymath}
    A^c\supset \event{S|_{[0,\tau_k]}\in G_{\tau_k}},
  \end{displaymath}
  and
  \begin{displaymath}
    A^c\supset \bigcup_{k=1}^\infty \event{S|_{[0,\tau_k]}\in G_{\tau_k}}=
    \event{S|_{[0,\tau]}\in G_{\tau}}, 
  \end{displaymath}
  where in the last step we used that $\tau_k$ approaches $\tau$ from the right
  and $G$ is open. On the other hand $\event{S|_{[0,\tau]}\notin
    G_\tau}\subset A$ is obvious so we can conclude that the events
  \begin{displaymath}
    A^c=\event{M_\tau=\P{S\in G\cond \F_\tau}>0},\quad\text{and}\quad
    \event{S|_{[0,\tau]}\in G_\tau} 
  \end{displaymath}
  are equal up to a negligible event. 
  
  Since the space $C([0,T],\cO)$ is second countable, its topology has 
  a countable base 
  and we may conclude that there is $\Omega'$ of full
  probability such that on $\Omega'$ for
      all open $G\subset C([0,T],\cO)$
  \begin{equation*}
    Q_{S|\F_\tau}(G)>0,\quad\text{exactly when $S|_{[0,\tau]}\in G_\tau$}. 
  \end{equation*}
  But then for  $\omega\in \Omega'$ the support of the random Borel measure
  $Q_{S|\F_\tau}(.,\omega)$ 
  on $C([0,T],\cO)$ is the (random) closed set
  $\set{g}{g|_{[0,\tau(\omega)]}=S(\omega)|_{[0,\tau(\omega)]}}$ as stated.
\end{proof}

\section{Conditional full support; 
  extension of a result of Pakkanen} 
\label{sec:CFS}
 
In this section we prove Theorem \ref{thm:CFS}. 
That is, we 
give a sufficient condition for a multidimensional
continuous semimartingale to have full support. 
It also gives the conditional full support of the process. As we
already remarked 
it seems to be new in the sense, that we do not assume that our process
solves a stochastic differential equation as it is done in \cite{MR2190038}
and it is not only for one dimensional processes as it is in
\cite{MR2731340}. We use comparison with a squared Bessel process of suitably
chosen dimension. 

\begin{theorem}\label{thm:cfs} 
  Let $X$ be a $d$-dimensional \ito/ process, such that 
  \begin{displaymath}
    \d X_t=\mu_t\d t+\sigma_t \d W_t
  \end{displaymath}
  Assume that $\abs{\mu}$, $\norm{a}$ and $\norm{a^{-1}}$ are  bounded
  processes, where $a_t=\sigma_t\sigma_t'$.
  Then $X$ has full support. 
\end{theorem}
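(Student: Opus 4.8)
The plan is to reduce the statement to a comparison with a squared Bessel process, exactly as the section heading advertises. Fix a target path: by Definition \ref{df:cfs} it suffices to show that for every $t\in[0,T]$, every continuous $f:[0,T]\to\real^d$, and every $\eta>0$, we have $\P{\,\sup_{t\le u\le T}\abs{X_u-f(u)}<\eta\mid\F_t}>0$ a.s.\ on the event that the already-observed piece $X|_{[0,t]}$ is within the appropriate neighbourhood; and as in the argument given in subsection \ref{sec:ex} for the general conditioning step, it is enough to treat $t=0$ with $\F_0$ trivial, the conditional case following by working under a regular version of the conditional law of $(X,W)$ given $\F_t$ and time-shifting $[t,T]$ to $[0,T]$ (all hypotheses — boundedness of $\abs\mu$, $\norm a$, $\norm{a^{-1}}$ — survive conditioning and time-change). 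So I would reduce to: for every continuous $f$ with $f(0)=X_0$ and every $\eta>0$, $\P{\sup_{0\le u\le T}\abs{X_u-f(u)}<\eta}>0$.

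First I would polygonally approximate $f$ and peel it off: it is classical (and follows from Girsanov plus the boundedness of $\abs\mu$) that we may drop the drift $\mu$, and then by a support-theorem-type argument for the martingale part it is enough to handle $f\equiv X_0$ constant, i.e.\ to prove that with positive probability $X$ stays in the ball $B(X_0,\eta)$ on all of $[0,T]$, with the refinement that the argument is uniform enough to be chained along a partition $0=t_0<t_1<\dots<t_N=T$ so as to force $X$ to stay near the polygon through prescribed points. The workhorse estimate is therefore: \emph{there is a constant $c>0$, depending only on $T$, $\eta$ and the bounds on the coefficients, such that $\P{\sup_{0\le u\le T}\abs{X_u-X_0}<\eta}\ge c$.} To get this I would look at the scalar process $R_u=\abs{X_u-X_0}^2$. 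By It\^o's formula, $\d R_u = 2\langle X_u-X_0,\sigma_u\d W_u\rangle + (\operatorname{tr} a_u + 2\langle X_u-X_0,\mu_u\rangle)\,\d u$. The martingale part has quadratic variation $4\langle (X_u-X_0), a_u(X_u-X_0)\rangle\,\d u$, which by $\norm a,\norm{a^{-1}}$ bounded is comparable to $R_u\,\d u$ up to fixed constants; the drift of $R$ is bounded above by $C(1+\sqrt{R_u})$ with $C$ depending only on the coefficient bounds. Hence $R$ is, up to a deterministic time change and after a Dambis–Dubins–Schwarz representation of its martingale part, dominated (before it exits $[0,\eta^2)$) by a squared Bessel process $\rho$ of some fixed dimension $\delta_0$ (chosen large enough to absorb $\operatorname{tr} a_u$ plus the cross term) started at $0$. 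Since a squared Bessel process started at $0$ has $\P{\sup_{0\le s\le \theta}\rho_s<\eta^2}>0$ for every $\theta,\eta>0$, and the time change is controlled deterministically by the coefficient bounds, the desired lower bound $c>0$ follows. I would phrase the comparison via a one-dimensional comparison theorem for the SDE satisfied by (a time-changed version of) $R$ against $\d\rho=\delta_0\,\d s + 2\sqrt{\rho}\,\d\beta$, being careful only up to the exit time from $[0,\eta^2)$ so that the square-root coefficient stays in the regime where comparison is valid.

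Given this single-ball estimate, I would assemble the full-support statement by the Markov-type chaining: partition $[0,T]$ finely, apply the estimate on each $[t_{k},t_{k+1}]$ to the conditional law of $X$ given $\F_{t_k}$ (legitimate since the estimate's constant $c$ is uniform over starting points and over the conditional coefficients, which obey the same bounds), and multiply the resulting positive conditional probabilities; choosing the mesh and the polygon vertices appropriately forces $\sup_{[0,T]}\abs{X_u-f(u)}<\eta$ on an event of positive probability. The conditional full support assertion then comes for free: the whole argument was run under regular conditional laws given $\F_t$, so \eqref{def:CFS} holds.

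The main obstacle I expect is making the comparison with the squared Bessel process fully rigorous \emph{with a constant that is uniform over all admissible starting points and over conditioning}, because the drift of $R_u$ contains the cross term $2\langle X_u-X_0,\mu_u\rangle$ which has an indefinite sign and only a $\sqrt{R_u}$-type bound, and the martingale part's quadratic variation is only \emph{comparable} to $R_u\,\d u$ rather than equal to it; handling the exit-time localization so that the square-root diffusion coefficient stays in the comparison-theorem regime, and controlling the (random but deterministically bounded) time change uniformly, is the delicate bookkeeping. Everything else — the Girsanov removal of the drift, the polygonal approximation of $f$, and the chaining over a partition — is routine once the uniform single-ball estimate is in hand.
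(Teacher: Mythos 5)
Your proposal takes essentially the same route as the paper: reduce to a single-ball estimate at $t=0$ with trivial $\F_0$, set $R_u=\abs{X_u-X_0}^2$, time-change the martingale part of $R$ to a Brownian motion, and compare with a squared Bessel process. The paper, following Pakkanen, also reduces to $\mu\equiv 0$, $X_0=0$ and $G$ a ball about the zero path before doing the Bessel comparison. The ``main obstacle'' you flag at the end is, however, self-inflicted: once you actually carry out the Girsanov drift removal you yourself propose one sentence earlier, the indefinite-sign cross term $2\langle X_u-X_0,\mu_u\rangle$ is simply absent. In the reduced martingale case one has $\d R_u = 2\langle X_u,\sigma_u\,\d W_u\rangle + \operatorname{tr}a_u\,\d u$, whose finite-variation part is bounded and nonnegative; writing the martingale part as $2\int_0^u\sqrt{R_s}\,\d N_s$ with $N_t=\int_0^t X_s^{T}\d X_s/\abs{X_s}$ and $c\le\d\langle N\rangle_t/\d t\le C$, the DDS time change of $N$ turns $R$ into $\d\tilde R = 2\sqrt{\tilde R}\,\d\beta + b(s)\,\d s$ with $0\le b\le Cd/c$. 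The comparison with $\d Z=2\sqrt Z\,\d\beta+\delta\,\d t$, $\delta\ge Cd/c$, then applies \emph{globally} via Revuz--Yor IX.3.7 (same H\"older-$\tfrac12$ diffusion coefficient, ordered drifts, Tanaka/local-time argument), so there is no need for the exit-time localization you worry about, and $\tilde R_t\le Z_t$ for all $t$. Finally, $\langle N\rangle_T\le CT$ gives the deterministic control of the time change, yielding $\P{\sup_{t\le T}\abs{X_t}<\eps}\ge\P{\sup_{s\le CT}Z_s<\eps^2}>0$, and the conditional version follows exactly as you say, by running the argument under regular conditional laws.
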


The conditional full support of $X$, that is Theorem \ref{thm:CFS},
follows from the observation that 
$(X_u)_{u\in[s,T]}$ under the regular version of its conditional law given
$\F_s$ is an \ito/ process 
on the time interval $[s,T]$ satisfying the assumptions of Theorem \ref{thm:cfs}.
 
As it is observed in \cite{MR2731340}, under the assumption of Theorem
\ref{thm:cfs}, it is enough to
consider the case when $\mu=0$, $X_0=0$  and 
$G$ is an open ball around the identically zero function. This is the content
of Proposition \ref{prop:new} below. When $\mu=0$ and $X_0=0$ the process $X$ is a
martingale starting from zero, whose quadratic variation 
process $\q{X}_t=\int_0^t a_t\d t$ satisfies
\begin{equation}\label{eq:a}
  c I_d\leq a_t=\frac{\d\q{X}_t}{\d t} \leq C I_d, \quad \text{for all $t\geq0$}.
\end{equation}
Here $I_d$ is the identity matrix of dimension $d$.

We can also assume that $X$ is defined on $[0,\infty)$ although in Theorem
\ref{thm:cfs} it is defined only on $[0,T]$. We can simply extend it
using an independent $d$ dimensional Brownian motion $B$, by the formula
\begin{displaymath}
  \tX_t=
  \begin{cases}
    X_t&\text{if $t\leq T$},\\
    c'B_{t-T}+X_T&\text{if $t>T$}
  \end{cases}
\end{displaymath}
where $c'\in [c,C]$ is arbitrary.

\begin{proposition}\label{prop:new}
  Assume that for the $d$--dimensional continuous martingale
  $(X_t,\F_t)_{t\geq0}$ with $X_0=0$ \eqref{eq:a} holds. Then for all
  $T>0$ and $\eps>0$ 
  \begin{displaymath}
    \P{\sup_{t\leq T} \abs{X_t}\leq \eps}>0.
  \end{displaymath}
\end{proposition}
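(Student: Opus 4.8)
The plan is to compare $|X_t|^2$ with a time-changed squared Bessel process and use the well-known fact that a squared Bessel process of any dimension started from $0$ spends a positive amount of time near $0$. Concretely, write $R_t=|X_t|^2=\sum_{i=1}^d X_i(t)^2$. Applying It\^o's formula and using that $X$ is a continuous martingale with $d\q{X^i,X^j}_t=a_{ij}(t)\,dt$, one gets
\begin{displaymath}
  dR_t=2\sum_{i=1}^d X_i(t)\,dX_i(t)+\operatorname{tr}(a_t)\,dt.
\end{displaymath}
The martingale part $N_t=2\sum_i\int_0^t X_i(s)\,dX_i(s)$ has quadratic variation $d\q{N}_t=4\,X_t^T a_t X_t\,dt$, which by \eqref{eq:a} satisfies $4c\,R_t\,dt\leq d\q{N}_t\leq 4C\,R_t\,dt$; likewise $cd\leq\operatorname{tr}(a_t)\leq Cd$. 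So $R$ behaves like a squared Bessel process of a dimension that fluctuates in $[cd/C,\,Cd/c]$ run on a clock comparable to real time.

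The first step is to reduce to an honest squared Bessel process by a Dambis--Dubins--Schwarz time change on the martingale part $N$: there is a Brownian motion $\beta$ with $N_t=\beta_{\q{N}_t}$, and since $c\leq a_t\leq C$, the clock $\q{N}_t$ is absolutely continuous with derivative $\in[4cR_t,4CR_t]$. The second step is the analytic heart: I want to show $\P(\sup_{t\le T}R_t\le\eps^2)>0$. For this I would use a comparison/stochastic-domination argument. One clean route: the scalar SDE $dR_t=\operatorname{tr}(a_t)\,dt+2\sqrt{X_t^T a_t X_t}\,dB_t$ (with $B$ some Brownian motion, obtained by writing the martingale part as a stochastic integral against a one-dimensional BM via its quadratic variation) has drift bounded by $Cd$ and diffusion coefficient bounded by $2\sqrt{C}\sqrt{R_t}$. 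Comparing with the squared Bessel SDE $d\rho_t=Cd\,dt+2\sqrt{C}\sqrt{\rho_t^+}\,dB_t$, which is a time-scaled $\mathrm{BESQ}^{d}$, pathwise comparison of one-dimensional SDEs gives $R_t\le\rho_t$ for all $t$ (using that the diffusion coefficient $r\mapsto 2\sqrt{C}\sqrt{r^+}$ is H\"older-$1/2$, so the Yamada--Watanabe comparison theorem applies, and the drift of $\rho$ dominates). Then
\begin{displaymath}
  \P\Bigl(\sup_{t\le T}R_t\le\eps^2\Bigr)\ge\P\Bigl(\sup_{t\le T}\rho_t\le\eps^2\Bigr),
\end{displaymath}
and the right-hand side is strictly positive because $\rho$ is a (time-scaled) squared Bessel process started at $0$, whose law assigns positive probability to every neighbourhood of the constant path $0$ on $[0,T]$ — e.g.\ by scaling, $\P(\sup_{t\le T}\rho_t\le\eps^2)=\P(\sup_{t\le T'}\tilde\rho_t\le 1)$ for a standard $\mathrm{BESQ}^d$ and suitable $T'$, and this is a classical positive small-ball probability (the density of $\rho_{T'}$ near $0$ is positive when $d<2$, and for general $d$ one can condition on $\rho_{T'}$ small and use the explicit transition density, or simply invoke the support theorem for $\mathrm{BESQ}$).

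The step I expect to be the main obstacle is making the comparison rigorous when the effective dimension is $\ge 2$, since then $0$ is polar for the true squared Bessel process and one cannot hope for $R$ to hit $0$; but the statement only asks $R$ to stay \emph{small}, not to vanish, so the small-ball estimate for $\mathrm{BESQ}^d$ with $d\ge2$ is what is really needed, and the subtlety is that the naive upper comparison $R_t\le\rho_t$ must be set up so that $\rho$ is a genuine $\mathrm{BESQ}$ (constant dimension $\ge$ the fluctuating one) rather than something with a singular or sign-indefinite coefficient. An alternative that sidesteps pathwise comparison entirely — and which I would fall back on if the coefficient regularity causes trouble — is to estimate directly: cover $[0,T]$ by short intervals, on each of which, conditionally, $X$ moves like a martingale with controlled quadratic variation, and use the Dubins--Schwarz representation together with Brownian small-ball estimates and an induction over the sub-intervals to bound $\P(\sup_{t\le T}|X_t|\le\eps)$ from below by a positive quantity. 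Either way the quantitative input is just the Gaussian/Bessel small-ball lower bound, and nondegeneracy \eqref{eq:a} is exactly what lets the real-time clock and the Bessel clock be compared.
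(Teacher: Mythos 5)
Your plan is the right one and closely parallels the paper: study $R_t=\abs{X_t}^2$, time--change via Dambis--Dubins--Schwarz, and dominate by a squared Bessel process. But the comparison step as you wrote it has a real gap. You represent the martingale part of $R$ as $2\sqrt{X_t^T a_t X_t}\,\d B_t$ with $B$ a Brownian motion, and then compare with $\d\rho_t=Cd\,\d t+2\sqrt{C}\sqrt{\rho_t^+}\,\d B_t$. The drift of $\rho$ does dominate, but the diffusion coefficients of $R$ and $\rho$ are \emph{different functions}: $R$'s is $2\sqrt{X_t^Ta_tX_t}$ (which is not even a function of $R_t$ alone, and at a crossing $R_t=\rho_t$ can be as small as $2\sqrt{c}\,\sqrt{\rho_t}$), while $\rho$'s is $2\sqrt{C}\,\sqrt{\rho_t}$. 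The Yamada--Watanabe / Ikeda--Watanabe comparison theorem (and the Revuz--Yor IX.(3.7) statement the paper cites) requires the \emph{same} diffusion coefficient for both equations; ``$\sigma_R\le\sigma_\rho$'' does not by itself give pathwise domination when both start at $0$. Indeed, in the Tanaka argument the quadratic variation $\d\q{R-\rho}_t=(\sigma_R-\sigma_\rho)^2\d t$ does not vanish when $R_t=\rho_t>0$, so one cannot conclude the local time of $R-\rho$ at $0$ is zero, and the whole comparison can fail. The obstacle you flagged (fluctuating dimension $\ge 2$) is not the real one; the mismatch of diffusion coefficients is.

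The paper's proof is organized precisely to remove this mismatch before comparing. It factors $\d R_t=2\sqrt{R_t}\,\d N_t+\operatorname{tr}(a_t)\,\d t$ with the \emph{normalized} martingale $N_t=\int_0^t X_s^T\,\d X_s/\abs{X_s}$, so that $c\le \d\q{N}_t/\d t\le C$. Time--changing $R$ by the inverse of $\q{N}$ turns $N$ into a Brownian motion $\beta$, and the diffusion coefficient of $\tR_t=R_{\eta(t)}$ becomes \emph{exactly} $2\sqrt{\tR_t}$, with a bounded drift $b(s)\in[0,Cd/c]$. Only then is $\tR$ compared with $\d Z_t=2\sqrt{Z_t}\,\d\beta_t+\delta\,\d t$ ($\delta\geq Cd/c$): both equations have the same driving $\beta$ and the same H\"older--$1/2$ diffusion coefficient, so the one--dimensional comparison theorem gives $\tR\le Z$, and the bound $\q{N}_T\le CT$ transfers the BESQ small--ball estimate back to $R$. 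If you carry out your DDS step in this normalized form (rather than with $N_t=2\sum_i\int X_i\,\d X_i$, which leaves a state--dependent diffusion coefficient), the rest of your argument goes through.
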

\begin{proof} 
  Let $R_t=\abs{X_t}^2=\sum_{i=1}^d (X^{(i)}_t)^2$. Then
  \begin{displaymath}
    R_t=2\int_0^t \sqrt{R_s}\frac{X^T_s\d X_s}{\abs{X_s}}+\text{Tr}(\q{X}_t)
  \end{displaymath}
  By \eqref{eq:a} 
  \begin{displaymath}
    N_t=\int_0^t \frac{X^T_s\d X_s}{\abs{X_s}}
  \end{displaymath}
  is a one dimensional martingale with $c\leq \frac{\d\q{N}_t}{\d t}\leq C$.
  Let $(\eta(t))_{t\geq 0}$ be the time change making $N$ a one-dimensional
  Brownian motion $\beta_t=N_{\eta(t)}$. For the time changed process
  $\tR_t=R_{\eta(t)}$ we have $R_t=\tR_{\q{N}_t}$ and  
  \begin{displaymath} 
    \tR_t=R_{\eta(t)}=2\int_0^t \sqrt{\tR_s}\d\beta_s+ \int_0^t b(s)\d s 
  \end{displaymath}
  Here $b(s)=\text{Tr}(a_{\eta(s)}) \frac{\d\eta(s)}{\d s}$ 
  and $0\leq b(s)\leq Cd/c$.

  Now we compare $\tR$ to 
  the solution of 
  \begin{displaymath}
    \d Z_t=2\sqrt{Z_t}\d\beta_t+\delta \d t, \quad Z_0=0.
  \end{displaymath}
  where $\delta\geq dC/c$. Note that the drift of $Z$ is bigger than that of
  $\tR$.  
  
  Then by a standard comparison result of the solutions of SDEs, $\tR_t\leq
  Z_t$ for all $t\geq 0$, see \cite[chapter IX, (3.7) Theorem on page
  394.]{revuz-yor}. 
  This argument is based on two simple
  observation. First, $\tR-Z$ can not accumulate local time at level zero, and
  then by Tanaka formula $\abs{\tR-Z}_+$ is a non-negative continuous
  semimartingale starting
  from zero having non-positive drift. This is only possible if
  $\abs{\tR-Z}_+$ is identically zero.  

  Now, $Z$ is a squared Bessel process of dimension $\delta$, so it stays
  below $\eps^2$ on $[0,s]$ with positive probability for any $\eps>0$ and
  $s\geq 0$. Since, $R_t=\tR_{\q{N}_t}$  and $\q{N}_t\leq CT$ for $t\leq T$ 
  we have that
  \begin{multline*}
    \P{\sup_{s\leq T}\abs{X_s}<\eps}= \P{\sup_{s\leq T} R_s<\eps^2}\\\geq
    \P{\sup_{s\leq CT} \tR_s<\eps^2}\geq
    \P{\sup_{s\leq CT} Z_s<\eps^2}>0.\qedhere
  \end{multline*}
\end{proof}


\begin{thebibliography}{12}

\bibitem[\protect\citeauthoryear{Delbaen and Schachermayer}{2006}]{MR2200584}
\begin{bbook}[author]
\bauthor{\bsnm{Delbaen},~\bfnm{Freddy}\binits{F.}} \AND
  \bauthor{\bsnm{Schachermayer},~\bfnm{Walter}\binits{W.}}
(\byear{2006}).
\btitle{The mathematics of arbitrage}.
\bseries{Springer Finance}.
\bpublisher{Springer-Verlag}, \baddress{Berlin}.
\bmrnumber{MR2200584 (2007a:91001)}
\end{bbook}
\endbibitem

\bibitem[\protect\citeauthoryear{Fernholz}{1999}]{MR1684221}
\begin{barticle}[author]
\bauthor{\bsnm{Fernholz},~\bfnm{Robert}\binits{R.}}
(\byear{1999}).
\btitle{On the diversity of equity markets}.
\bjournal{J. Math. Econom.}
\bvolume{31}
\bpages{393--417}.
\bdoi{10.1016/S0304-4068(97)00018-9}.
\bmrnumber{1684221 (2000e:91072)}
\end{barticle}
\endbibitem

\bibitem[\protect\citeauthoryear{Fernholz}{2001}]{MR1861997}
\begin{barticle}[author]
\bauthor{\bsnm{Fernholz},~\bfnm{Robert}\binits{R.}}
(\byear{2001}).
\btitle{Equity portfolios generated by functions of ranked market weights}.
\bjournal{Finance Stoch.}
\bvolume{5}
\bpages{469--486}.
\bdoi{10.1007/s007800100044}.
\bmrnumber{1861997 (2002h:91052)}
\end{barticle}
\endbibitem

\bibitem[\protect\citeauthoryear{Fernholz, Karatzas and
  Kardaras}{2005}]{MR2210925}
\begin{barticle}[author]
\bauthor{\bsnm{Fernholz},~\bfnm{Robert}\binits{R.}},
  \bauthor{\bsnm{Karatzas},~\bfnm{Ioannis}\binits{I.}} \AND
  \bauthor{\bsnm{Kardaras},~\bfnm{Constantinos}\binits{C.}}
(\byear{2005}).
\btitle{Diversity and relative arbitrage in equity markets}.
\bjournal{Finance Stoch.}
\bvolume{9}
\bpages{1--27}.
\bdoi{10.1007/s00780-004-0129-4}.
\bmrnumber{2210925 (2006k:60123)}
\end{barticle}
\endbibitem

\bibitem[\protect\citeauthoryear{Fernholz and Karatzas}{2009}]{Karatzas200989}
\begin{bincollection}[author]
\bauthor{\bsnm{Fernholz},~\bfnm{Robert}\binits{R.}} \AND
  \bauthor{\bsnm{Karatzas},~\bfnm{Ioannis}\binits{I.}}
(\byear{2009}).
\btitle{Stochastic Portfolio Theory: an Overview}.
In \bbooktitle{Handbook of Numerical Analysis},
(\beditor{\bfnm{P.~G.}\binits{P.~G.}~\bsnm{Ciarlet}}, ed.).
\bseries{Handbook of Numerical Analysis}
\bvolume{15}
\bpages{89 - 167}.
\bpublisher{Elsevier}.
\bdoi{10.1016/S1570-8659(08)00003-3}
\end{bincollection}
\endbibitem

\bibitem[\protect\citeauthoryear{Guasoni, R{\'a}sonyi and
  Schachermayer}{2008}]{MR2398764}
\begin{barticle}[author]
\bauthor{\bsnm{Guasoni},~\bfnm{Paolo}\binits{P.}},
  \bauthor{\bsnm{R{\'a}sonyi},~\bfnm{Mikl{\'o}s}\binits{M.}} \AND
  \bauthor{\bsnm{Schachermayer},~\bfnm{Walter}\binits{W.}}
(\byear{2008}).
\btitle{Consistent price systems and face-lifting pricing under transaction
  costs}.
\bjournal{Ann. Appl. Probab.}
\bvolume{18}
\bpages{491--520}.
\bdoi{10.1214/07-AAP461}.
\bmrnumber{2398764 (2009a:91050)}
\end{barticle}
\endbibitem

\bibitem[\protect\citeauthoryear{Guasoni and R\'asonyi}{2012}]{GR}
\begin{barticle}[author]
\bauthor{\bsnm{Guasoni},~\bfnm{Paolo}\binits{P.}} \AND
  \bauthor{\bsnm{R\'asonyi},~\bfnm{Mikl\'os}\binits{M.}}
(\byear{2012}).
\btitle{Fragility of Arbitrage and Bubbles in Diffusion Models}.
\bdoi{10.2139/ssrn.1856223}
\end{barticle}
\endbibitem

\bibitem[\protect\citeauthoryear{Maris and Sayit}{2012}]{Maris2012}
\begin{barticle}[author]
\bauthor{\bsnm{Maris},~\bfnm{Florian}\binits{F.}} \AND
  \bauthor{\bsnm{Sayit},~\bfnm{Hasanjan}\binits{H.}}
(\byear{2012}).
\btitle{Consistent Price Systems in Multiasset Markets}.
\bjournal{International Journal of Stochastic Analysis}
\bvolume{2012}
\bpages{14 pages}.
\bnote{Article ID 687376}.
\bdoi{10.1155/2012/687376}
\end{barticle}
\endbibitem

\bibitem[\protect\citeauthoryear{Osterrieder and
  Rheinl\"ander}{2006}]{Osterrieder2006}
\begin{barticle}[author]
\bauthor{\bsnm{Osterrieder},~\bfnm{J\"org~R.}\binits{J.~R.}} \AND
  \bauthor{\bsnm{Rheinl\"ander},~\bfnm{Thorsten}\binits{T.}}
(\byear{2006}).
\btitle{Arbitrage opportunities in diverse markets via a non-equivalent measure
  change}.
\bjournal{Annals of Finance}
\bvolume{2}
\bpages{287--301}.
\bdoi{10.1007/s10436-006-0037-z}
\end{barticle}
\endbibitem

\bibitem[\protect\citeauthoryear{Pakkanen}{2010}]{MR2731340}
\begin{barticle}[author]
\bauthor{\bsnm{Pakkanen},~\bfnm{Mikko~S.}\binits{M.~S.}}
(\byear{2010}).
\btitle{Stochastic integrals and conditional full support}.
\bjournal{Journal of Applied Probability}
\bvolume{47}
\bpages{650--667}.
\bdoi{10.1239/jap/1285335401}.
\bmrnumber{MR2731340}
\end{barticle}
\endbibitem

\bibitem[\protect\citeauthoryear{Revuz and Yor}{1991}]{revuz-yor}
\begin{bbook}[author]
\bauthor{\bsnm{Revuz},~\bfnm{Daniel}\binits{D.}} \AND
  \bauthor{\bsnm{Yor},~\bfnm{Marc}\binits{M.}}
(\byear{1991}).
\btitle{Continuous martingales and {B}rownian motion}.
\bseries{Grundlehren der Mathematischen Wissenschaften [Fundamental Principles
  of Mathematical Sciences]}
\bvolume{293}.
\bpublisher{Springer-Verlag}, \baddress{Berlin}.
\bmrnumber{MR1083357 (92d:60053)}
\end{bbook}
\endbibitem

\bibitem[\protect\citeauthoryear{Stroock and Varadhan}{2006}]{MR2190038}
\begin{bbook}[author]
\bauthor{\bsnm{Stroock},~\bfnm{Daniel~W.}\binits{D.~W.}} \AND
  \bauthor{\bsnm{Varadhan},~\bfnm{S.~R.~Srinivasa}\binits{S.~R.~S.}}
(\byear{2006}).
\btitle{Multidimensional diffusion processes}.
\bseries{Classics in Mathematics}.
\bpublisher{Springer-Verlag}, \baddress{Berlin}.
\bnote{Reprint of the 1997 edition}.
\bmrnumber{2190038 (2006f:60005)}
\end{bbook}
\endbibitem

\end{thebibliography}

\end{document}